\declaretheoremstyle[bodyfont=\normalfont,postheadspace=1em,qed=\qedsymbol]{examplestyle}
\declaretheorem{theorem}
\declaretheorem[within=section]{lemma}
\declaretheorem[sibling=lemma,name=Claim]{claim}
\declaretheorem[sibling=theorem,name=Question]{question}
\declaretheorem[sibling=lemma,name=Definition]{definition}
\declaretheorem[sibling=lemma,name=Example,style=examplestyle]{ex}
\newcommand{\figref}[1]{Figure \ref{fig:#1}}
\newcommand{\algref}[1]{Algorithm \ref{alg:#1}}
\newcommand{\lemref}[1]{Lemma \ref{lemma:#1}}
\newcommand{\theoref}[1]{Theorem \ref{theo:#1}}
\newcommand{\secref}[1]{Section \ref{sec:#1}}
\renewcommand{\eqref}[1]{Equation ({\ref{eq:#1}})}
\newcommand{\lemlab}[1]{\label{lemma:#1}}
\newcommand{\theolab}[1]{\label{theo:#1}}
\newcommand{\seclab}[1]{\label{sec:#1}}
\newcommand{\eqlab}[1]{\label{eq:#1}}
\newcommand{\PP}{\mathbb{P}}
\newcommand{\RR}{\mathbb{R}}
\newcommand{\bp}{{\bf p}}
\newcommand{\bx}{{\bf x}}
\newcommand{\disjointUnion}{\sqcup}
\def\final{1}
\newcommand{\JScomment}[1]{\marginpar{\pdfannot{/Type /Annot /Subtype /Text /T (Jessica ) /C [ 1 0 1 ] /Contents (#1) }}}
\newcommand{\JScomment}[1]{}
\newcommand{\ASJcomment}[1]{\marginpar{\pdfannot{/Type /Annot /Subtype /Text /T (Audrey  ) /C [ 0 1 0 ] /Contents (#1) }}}
\newcommand{\ASJcomment}[1]{}
\newcommand{\LTcomment}[1]{\marginpar{\pdfannot{/Type /Annot /Subtype /Text /T (Louis  ) /C [ 0 1 1 ] /Contents (#1) }}}
\newcommand{\LTcomment}[1]{}
\definecolor{dkgreen}{rgb}{0,0.6,0}
\definecolor{gray}{rgb}{0.5,0.5,0.5}
\definecolor{mauve}{rgb}{0.58,0,0.82}
\tiny\color{gray},
\journal{}
\begin{document}

\begin{frontmatter}

\title{Algorithms for detecting dependencies and rigid subsystems for CAD\tnoteref{adg}} \tnotetext[adg]{An extended abstract of this article appeared as ``Detecting dependencies in geometric constraint systems'' in the 10th International Workshop on Automated Deduction in Geometry (ADG 2014), Coimbra, Portugal, 2014.}

\author[JFadd]{James Farre\fnref{JFfund}}
\ead{jamesrfarre@gmail.com}
\author[MHC]{Helena Kleinschmidt}
\ead{klein23h@mtholyoke.edu}
\author[MHC]{Jessica Sidman\fnref{JSfund}}
\ead{jsidman@mtholyoke.edu}
\author[MHC]{Audrey St. John\fnref{ASJfund}}
\ead{astjohn@mtholyoke.edu}
\author[MHC]{Stephanie Stark\fnref{SSfund}}
\ead{stark23s@mtholyoke.edu}
\author[LTadd]{Louis Theran\fnref{LTfund}}
\ead{louis.theran@aalto.fi}
\author[MHC]{Xilin Yu\fnref{XYfund}}
\ead{yu25x@mtholyoke.edu}

\address[JFadd]{Department of Mathematics, University of Utah, Salt Lake City, USA}
\address[MHC]{Mount Holyoke College, South Hadley, MA USA}
\address[LTadd]{Aalto Science Institute and Department of Computer Science \\ Aalto University, 00076 Aalto, Finland}
\fntext[JFfund]{Partially supported by NSF DMS-0849637.}
\fntext[JSfund]{Partially supported by NSF DMS-0849637 and the Hutchcroft fund.}
\fntext[ASJfund]{Partially supported by the Clare Boothe Luce Foundation, NSF DMS-0849637, NSF IIS-1253146 and the Hutchcroft fund.}
\fntext[SSfund]{Partially suppored by the Mount Holyoke College Lynk Program.}
\fntext[LTfund]{Partially supported by the European Research Council under the European Union’s Seventh Framework Programme (FP7/2007-2013) / ERC grant agreement no 247029-SDModels, Academy of Finland (AKA) project COALESCE, and the Hutchcroft fund.}
\fntext[XYfund]{Partially supported by NSF IIS-1253146, an Aalto Science Institute (AScI) Internship, and the Mount Holyoke College Lynk Program.}

\begin{abstract}
Geometric constraint systems underly popular Computer Aided Design software. Automated
approaches for detecting dependencies in a design are critical for developing robust
solvers and providing informative user feedback, and we provide algorithms for two types of
dependencies. First, we give a pebble game algorithm for detecting generic dependencies.
Then, we focus on identifying the ``special positions" of a design in which
generically independent constraints	become dependent.
We present combinatorial algorithms
for identifying subgraphs associated to factors of a particular polynomial, whose vanishing
indicates a special position and resulting dependency.
Further factoring in the Grassmann-Cayley algebra may allow a geometric
interpretation giving conditions (e.g., ``these two lines being parallel cause
a dependency'') determining the special position.
\end{abstract}

\begin{keyword}
sparsity matroid \sep pebble game algorithm \sep cad constraints

\end{keyword}

\end{frontmatter}

\section{Introduction}
Constraint-based Computer Aided Design (CAD) software, such as the popular SolidWorks program, allows
engineers to create designs using intuitive geometric constraints. When a user adds a constraint
that is \emph{dependent}, the resulting system is \emph{over-constrained}. To provide
useful feedback, efficient approaches are required to detect the minimal sub-system containing
the dependency. In this paper, we present graph-based algorithms for decomposing the underlying combinatorial
structure of a system of CAD constraints. This decomposition allows us to associate polynomials whose vanishing indicates the existence of dependencies to subsets of constraints.

\subsection{Motivation}
Automated methods for the detection and resolution of dependencies in a CAD system are
important for the underlying solver as well as the user.
Adding a constraint to a fully defined sub-system
with no relative motion among its parts, or \emph{rigid block},
results in a dependency.
The rigidity models of
2D \emph{bar-and-joint} and $d$-dimensional \emph{body-and-bar} are well-known for having
combinatorial characterizations of \emph{generically rigid} frameworks (a \emph{bar} imposes a
distance constraint between a pair of points).
However, a combinatorial characterization for 3D bar-and-joint generic rigidity remains a
conspicuously open problem.
One property highlighted as a potential barrier is the existence of \emph{contextually rigid components} \cite{chengSitharamStreinu09}, %
which do not appear in 2D bar-and-joint or $d$-dimensional body-and-bar frameworks.
A rigid component is a vertex-maximal rigid block, and a
\emph{contextually rigid component} is one that is not rigid as an induced framework.
\figref{triplebanana} depicts the well-known 3D bar-and-joint ``triple banana'' example, in which
each ``banana'' (a
$K_5$ without an edge) is a rigid component; there is a fourth contextually rigid
component formed by $\{A,B,C\}$.
Contextually rigid components may arise in CAD systems \emph{even in the plane}
where generic rigidity is understood combinatorially in all dimensions \cite{stjohnSidman}.
We give an example of this phenomenon in \figref{2DcontextuallyRigidComp} which depicts a system of
3 rigid bodies in the plane; bodies $B$ and $C$ form a rigid component, but this subsystem is flexible
as an induced framework.

\begin{figure}
\subfigure[The well-known ``triple banana'' 3D bar-and-joint framework spanning 12 joints is flexible as each ``banana'' can rotate. This framework has a contextually rigid block \{$A$, $B$, $C$\} that is flexible as an induced framework, since there are no constraints among $A,B$ and $C$.]{\begin{minipage}[b]{.45\linewidth}\centering\includegraphics[width=.5\linewidth]
{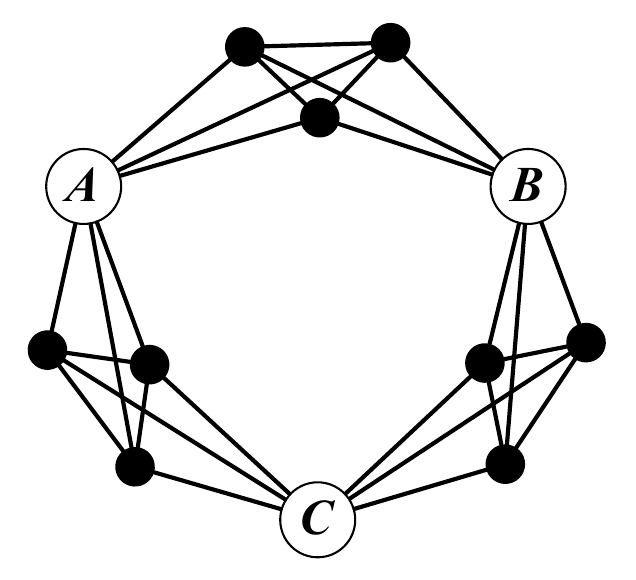}\end{minipage}\label{fig:triplebanana}}
\hfil
\subfigure[A flexible 2D body-and-cad framework consisting of 3 bodies with the following 4 constraints: dashed lines on $A$ and $B$ must be parallel; solid lines on $A$ and $C$ must be parallel;
two bars between bodies $B$ and $C$ fix the distance between pairs of points. The contextually rigid block \{B,C\} is flexible as an induced framework.]{\begin{minipage}[b]{.45\linewidth}\centering\includegraphics[width=.6\linewidth]
{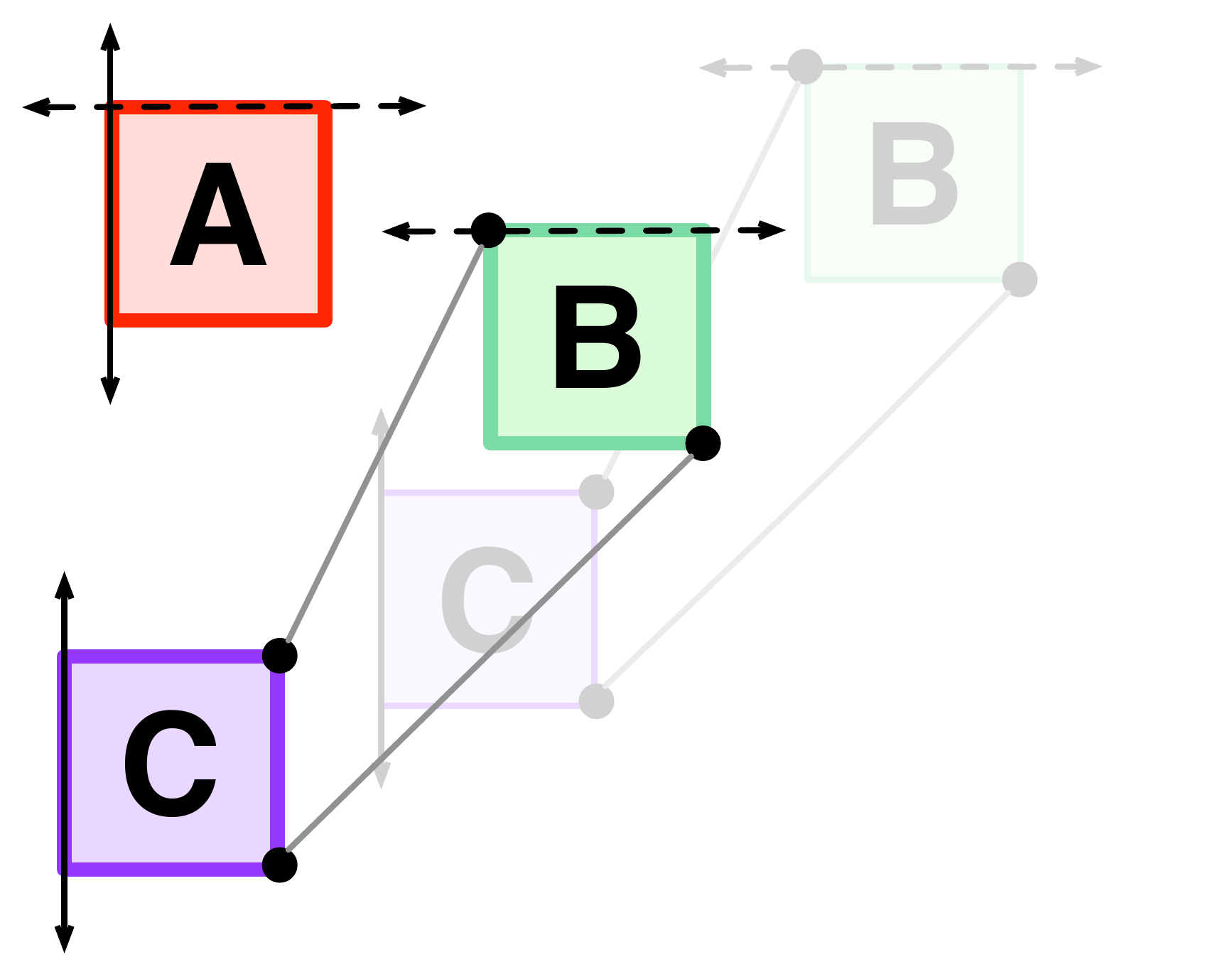}\end{minipage}\label{fig:2DcontextuallyRigidComp}}
\caption{Contextually rigid blocks highlight behavior that does not appear in 2D bar-and-joint and $d$-dimensional body-and-bar rigidity models.}
\label{fig:contextuallyRigid}
\end{figure}

In \figref{gen3Body2D} we change the parallel constraints for the framework in
\figref{2DcontextuallyRigidComp} to line-line coincidence constraints to produce a rigid framework that demonstrates
a related problem in detecting dependencies for a system that is in a \emph{special position}.
SolidWorks correctly identifies the system in \figref{gen3Body2D} as ``Fully Defined," and the framework satisfies the \emph{genericity} assumptions of the associated \emph{body-and-cad}
rigidity model; adding any constraint will result in a dependency.
However, simply changing the attachment point of one
bar results in a special (\emph{non-generic}) position that is \emph{flexible},
and SolidWorks correctly identifies it as ``Under Defined''; see \figref{swSketchSpecial}.
The constraint is now dependent, but
its consistency with the rest of the system permits a motion.
Embedding the same special position in the 3D Assembly environment highlights the difficulty that SolidWorks
appears to have; even though the design should permit the same motion, it is now detected as ``Over Defined''
(see \figref{swAssemblySpecial}).
The underlying combinatorics of the rigid and flexible
systems are the same, and it is the \emph{geometry of the constraints (the two bars are parallel) that
leads to the special position and associated dependency}.

\begin{figure}[tb]
\subfigure[SolidWorks' 2D Sketch environment identifies the {\bf rigid} \emph{generic} embedding as ``Fully Defined.'']{\begin{minipage}[b]{.48\linewidth}\centering
\includegraphics[scale=.45]{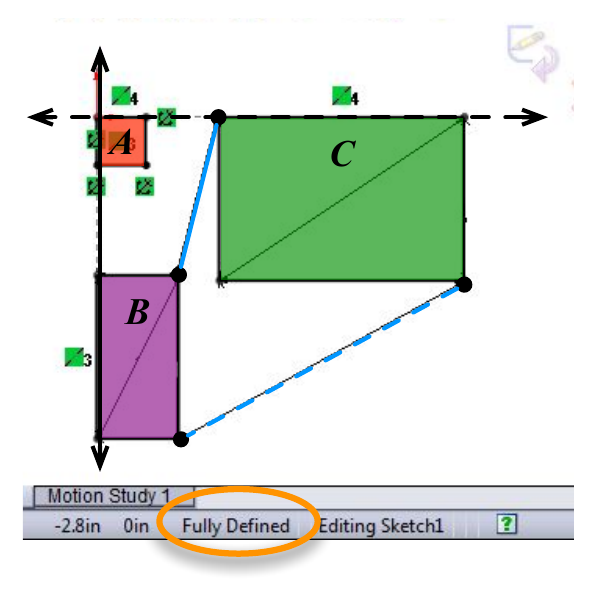}\end{minipage}\label{fig:swSketchGenRigid}}
\hfil
\subfigure[The underlying combinatorics of the framework is captured by a multigraph with a vertex for each body and labeled edge for each constraint.]{\begin{minipage}[b]{.48\linewidth}\centering
\includegraphics[scale=.45]{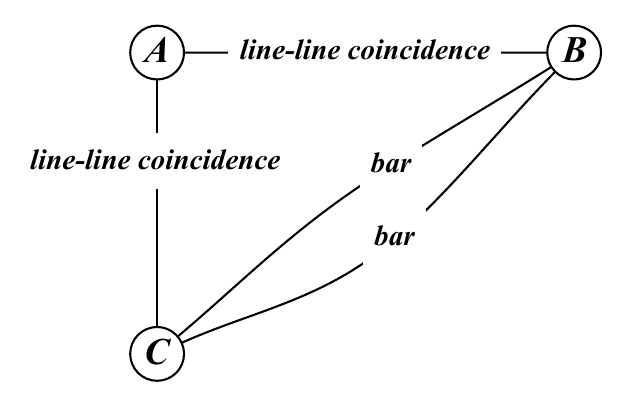}\end{minipage}\label{fig:threeBodyComb}}
\caption{A generically rigid 2D body-and-cad framework consisting of 3 bodies with the following 4 constraints: a line-line coincidence along the horizontal dashed line for $A$ and $B$; a line-line coincidence along the vertical solid line for $A$ and $C$; two bars between bodies $B$ and $C$.}
\label{fig:gen3Body2D}
\end{figure}
\begin{figure}[tb]
\subfigure[SolidWorks' 2D Sketch environment gives an identification of ``Under Defined,'' but does not reliably allow the user to explore the allowable motion. The faded position was found by suppressing the dependent constraint, then investigating the motion before unsuppressing it.]{\begin{minipage}[b]{.48\linewidth}\centering
\includegraphics[scale=.45]{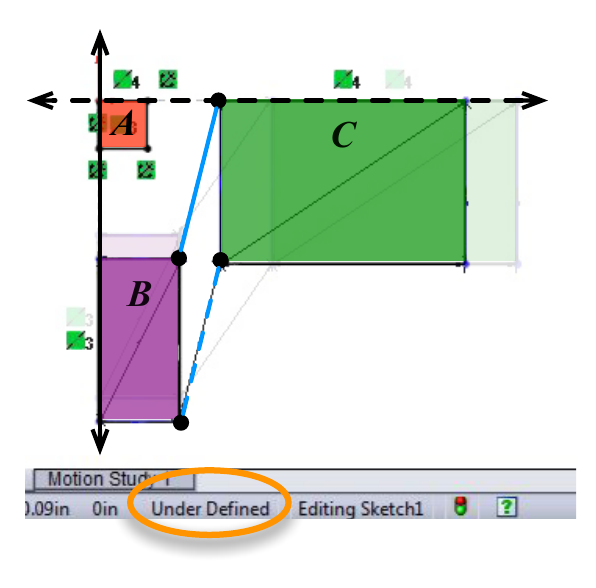}\end{minipage}\label{fig:swSketchSpecial}}
\hfil
\subfigure[SolidWorks' 3D Assembly environment gives a different identification of ``Over Defined'' for the same design embedded in 3D. No motion can be explored without suppressing the dependent constraint.]{\begin{minipage}[b]{.48\linewidth}\centering
\includegraphics[scale=.45]{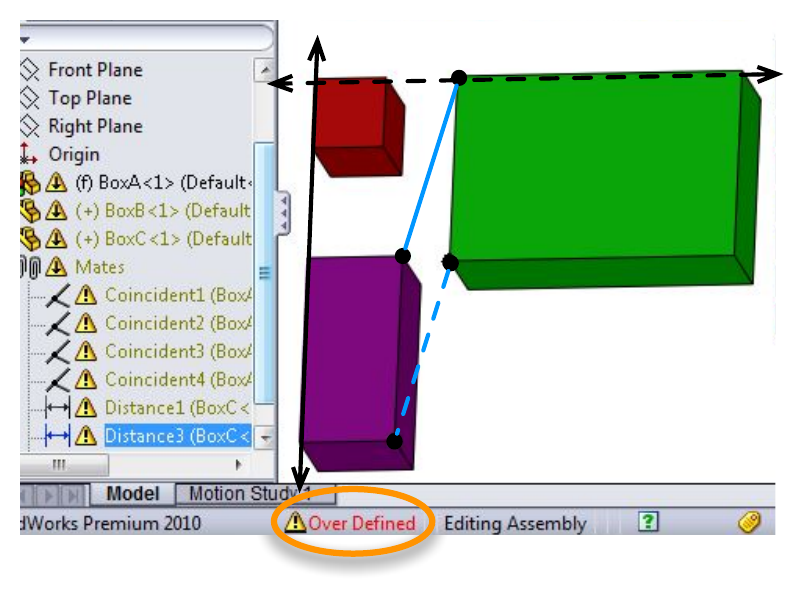}\end{minipage}\label{fig:swAssemblySpecial}}
\caption{Commercial CAD software, such as SolidWorks, is unreliable when presented with a  \emph{flexible special position}. This system shares the underlying combinatorics of \figref{gen3Body2D}, but the placement of one bar (dashed) is changed to be dependent on the other constraints. The (consistent) dependency arises due to the \emph{geometric condition} that the two bars between $B$ and $C$ are parallel.}
\label{fig:special3Body2D}
\end{figure}

\subsection{Related work}

In bar-and-joint and body-and-bar rigidity theory, constraints are specified by fixing the distance between pairs of points
and can be represented by quadratic equations.
Many other geometric constraints can also be represented by polynomial equations,
and it is possible to use algebraic methods to study systems of geometric constraints and their consequences.  See \cite{chou} for an introduction
to automated theorem proving based on Wu's method.  More recent work of \cite{geissSchreyer} uses ideas from algebraic geometry to find
special positions of the Stewart-Gough platform, and generalized Stewart-Gough platforms (with angle constraints allowed) were studied in \cite{gaoLeiLiaoZhang}.

However, working algebraically with polynomials is limited because it is computationally intensive.  Hence, much work in
rigidity theory instead focuses on \emph{infinitesmial rigidity}.  A structure is \emph{generically minimally infinitesimally} rigid if a certain
\emph{rigidity matrix} has maximal rank for some realization of the underlying graph.
As a matrix drops rank on a closed set, \emph{almost every} framework with the same combinatorics is infinitesimally
rigid (and therefore rigid).  Thus, one may detect generic dependencies numerically, by picking
random realizations.  This is the approach taken by the ``witness method'' of \cite{michelucciFoufou}.  The drawback
of numerical methods is that fast, stable, algorithms, such as SVD, do not identify the support of a
minimal dependency while those based on Gaussian elimination are not stable.
(This can
be overcome by using finite fields and the Schwartz Lemma, as discussed in \cite{GHT10}.)

Generic rigidity of a system is defined
via the rank of the rigidity matrix, which we analyze by studying a polynomial called the \emph{pure condition}.
The pure condition of a body-and-bar framework was introduced by White and Whiteley
\cite{whiteWhiteley}, who showed how to interpret the irreducible factors combinatorially and how
to describe some special positions using synthetic geometry via the Grassmann-Cayley algebra.

For certain structural models
more robust combinatorial characterizations of generic rigidity are known.
Particularly
relevant here are results for geometric constraints arising in CAD:
2D point-line frameworks \cite{owenJacksonPTLine}
and body-and-cad frameworks in 2D and 3D (omitting point-point coincidences) \cite{stjohnSidman}.
Combinatorial counting conditions arise as necessary conditions for rigidity theory,
usually in terms of a family of
``sparse matroidal graphs'' that are fundamental in generic rigidity theory (see
\cite[Appendix]{whiteley:Matroids:1996}, which reports work of White and Whiteley).
Associated pebble game algorithms can be used to check rigidity and detect components, relying on the matroidal property of \cite{leeStreinu,leeStreinuTheranCCCG}.
For body-and-bar \cite{tay} %
and 2D bar-and-joint \cite{L70} frameworks, the counting
conditions are generically sufficient as well.
\ASJcomment{Tried to cite Owen and Jackson in this paragraph}
Owen and Jackson show how to adapt Edmonds' matroid union
algorithm to the framework of pebble games for 2D point-line frameworks in \cite{owenJacksonPTLine}.
A combinatorial characterization of 3D bar-and-joint rigidity remains an open problem;
while the network flow approach of \cite{sitharamZhou2004} gives a polynomial time algorithm for the related
concept of \emph{module-rigidity},
the class of module-rigid graphs does not include rigid
\emph{nucleation-free graphs} \cite{chengSitharamStreinu09}.

In this paper, we
will see that by combining combinatorial and algebraic viewpoints, we can extract even more information
and analyze \emph{special positions} which are not covered by the combinatorial theorems.

\subsection{Contributions}
We present algorithms for detecting dependencies in CAD systems modeled as
body-and-cad frameworks. The first is a pebble game
algorithm that can check for \emph{generic dependencies} via the
combinatorial property from \cite{stjohnSidman};
when a constraint is determined to be dependent, we additionally detect its
\emph{fundamental circuit} (minimal set of constraints involved in the dependency).
To adapt the pebble game to our setting, the algorithm needs to
partition the edges in a graph and maintain $(a,a)$-sparsity on one part
and $(b,b)$-sparsity on the other; this may require dynamically adjusting the
partitions.  To prove that our algorithm is correct, we show
that it is implementing Knuth's matroid partitioning algorithm \cite{Knuth:1973:MP:891978}.

Additionally, a framework that is generically minimally
rigid (thus containing no generic dependencies) may be in a
flexible \emph{special position} caused by a non-generic dependency.
Since a special position is indicated by the vanishing of a framework's pure condition,
we develop algorithms for finding graph minors, which we refer to as \emph{factor graphs},
corresponding to its factors.  In the body-and-bar setting of \cite{whiteWhiteley}, irreducible
factor graphs correspond to irreducible isostatic subframeworks.  However, in our setting we
may have irreducible factors that do not have a natural interpretation as the pure condition
of any subframework; in particular, contextually rigid blocks give rise to these.  \JScomment{added ``sub"}
\ASJcomment{added comment about contextually rigid blocks.. I think this is true?}

\subsection{Overview}
We begin with an introduction to body-and-cad rigidity theory in \secref{background}. In \secref{PC},
we give two combinatorial descriptions of the pure condition.  We explore
the correspondence between factors of the pure condition and graph minors in \secref{factors}.
\secref{algs} contains the algorithms for detecting {\bf generic dependencies} and
factor graphs, whose vanishing determines when a framework is in a {\bf special position}. We conclude
with a case study in \secref{caseStudy}, where we provide geometric conditions for
special positions, and
conclusions and open questions in \secref{OpenQuestions}.
\section{Background}\seclab{background}
In this section, we review the fundamentals of body-and-cad rigidity theory; for full technical details,
refer to \cite{halleretal} and \cite{stjohnSidman}.

\subsection{Body-and-cad frameworks}
A {\em body-and-cad framework} consists of $n$ full-dimensional bodies with pairwise \emph{c}oincidence,
\emph{a}ngular, or \emph{d}istance constraints between them; these {\em cad} constraints are specified
between \emph{geometric elements} which are affine
linear spaces (e.g., a point, line or plane in 3D) rigidly affixed
to the bodies.  The allowed motions of a body-and-cad framework are continuous motions of the bodies
that preserve the given constraints.  A body-and-cad framework is \emph{rigid} when all of the allowed
motions are trivial, i.e., they consist of applying the same rigid-body motion to each of the bodies;
otherwise it is \emph{flexible}.

\subsubsection{CAD constraints and primitive constraints.}
There are 9 different constraint types in $2$D and $21$ in $3$D.
Examples of constraints in $2$D are: {\bf point-point distance} (a bar), {\bf point-line coincidence},
{\bf point-line distance}, {\bf line-line coincidence}, and {\bf line-line angular}.
Each constraint represents \emph{one or
more equations} restricting the relative motion of the bodies involved.  A constraint can then
be further decomposed into \emph{primitive constraints}, which correspond to single equations.
Let $d$ be the dimension of the ambient space.  Primitive constraints come in two types, which require distinct algebraic treatment: (i)
\emph{blind} constraints, that can potentially restrict any of the $\binom{d+1}{2}$ relative degrees of freedom; (ii)
\emph{angular} constraints, that restrict only the $\binom{d}{2}$ relative \emph{rotational} degrees
of freedom.

\subsection{Examples}
To give some intuition about body-and-cad rigidity, consider the planar body-and-cad framework in Figure \ref{fig:simple2Dexample}\footnote{Note that results in this paper apply in dimension 3 as well.}.
It is composed of two rigid bodies $A$ (the square) and $B$ (the triangle); placing three
cad constraints,
a {\bf point-line coincidence}, {\bf line-line perpendicular} and {\bf point-line distance},
results in a {\em generically minimally rigid} (Definition \ref{genMinRigid}) framework.

\begin{figure}
\subfigure[The framework is composed of two rigid bodies $A$ (square) and $B$ (triangle).]{\begin{minipage}[b]{.3\linewidth}\centering\includegraphics[scale=.5]
{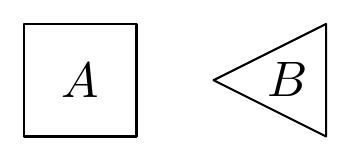}\end{minipage}\label{fig:simple2DexampleBodies}}
\hfil
\subfigure[Constraint 1 specifies a {\bf point-line coincidence} that requires point $p_{B_1}$ on the triangle to lie on line $\ell_{A_1}$ on the square.]{\begin{minipage}[b]{.3\linewidth}\centering\includegraphics[scale=.5]
{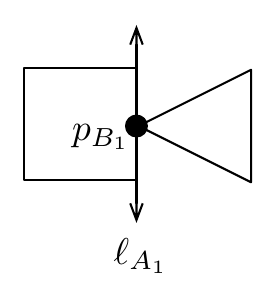}\end{minipage}\label{fig:simple2DexampleC1}}
\hfil
\subfigure[After the specification of Constraint 1, the framework allows internal motion with 2 degrees of freedom.]{\begin{minipage}[b]{.3\linewidth}\centering\includegraphics[scale=.5]
{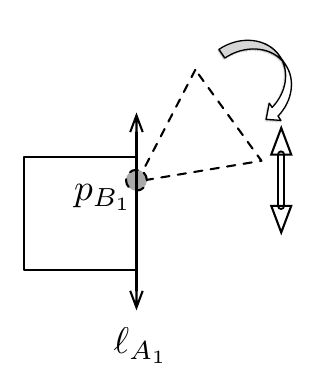}\end{minipage}\label{fig:simple2DexampleC1motion}}\\
\subfigure[Constraint 2 specifies a {\bf line-line perpendicular} constraint that requires line $\ell_{B_2}$ on the triangle to be perpendicular to line $\ell_{A_2}$ on the square.]{\begin{minipage}[b]{.3\linewidth}\centering\includegraphics[scale=.5]
{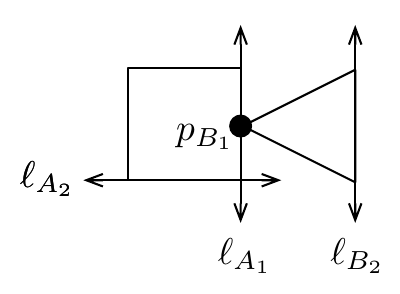}\end{minipage}\label{fig:simple2DexampleC2}}
\hfil
\subfigure[After the specification of Constraints 1 and 2, the framework allows internal motion with 1 degree of freedom.]{\begin{minipage}[b]{.3\linewidth}\centering\includegraphics[scale=.5]
{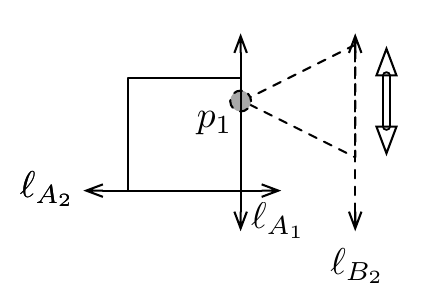}\end{minipage}\label{fig:simple2DexampleC2motion}}
\hfil
\subfigure[Constraint 3 specifies a {\bf point-line distance} constraint that requires point $p_{B_3}$ on the triangle to be a fixed distance from line $\ell_{A_3}$ on the square. The resulting framework is minimally rigid.]{\begin{minipage}[b]{.3\linewidth}\centering\includegraphics[scale=.5]
{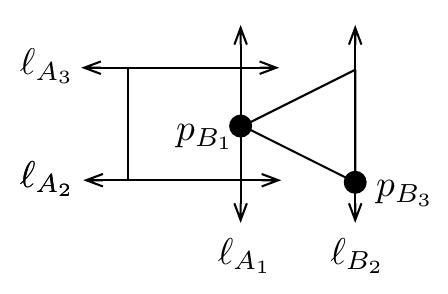}\end{minipage}\label{fig:simple2DexampleGenericC3}}
\caption{A generically minimally rigid 2D body-and-cad framework.}
\label{fig:simple2Dexample}
\end{figure}

\begin{figure}
\subfigure[Constraint I specifies a {\bf line-line distance} constraint that requires line $\ell_{B_1}$ on the triangle to be a fixed distance from line $\ell_{A_1}$ on the square.]{\begin{minipage}[b]{.3\linewidth}\centering\includegraphics[scale=.5]{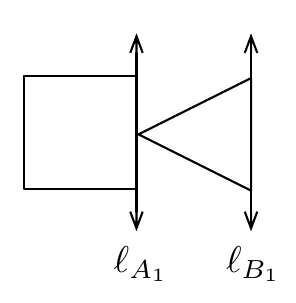}\end{minipage}\label{fig:polyExample2DC12}}
\hfil
\subfigure[After the specification of Constraint I, the framework allows internal motion with 1 degree of freedom.]{\begin{minipage}[b]{.3\linewidth}\centering\includegraphics[scale=.5]{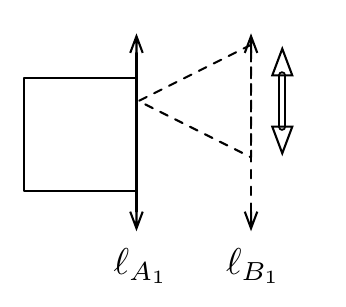}\end{minipage}\label{fig:polyExample2DC12motion}}
\hfil
\subfigure[Constraint II specifies a {\bf point-line distance} constraint that requires point $p_{B_3}$ on the triangle to be a fixed distance from line $\ell_{A_3}$ on the square. The resulting framework is minimally rigid.]{\begin{minipage}[b]{.3\linewidth}\centering\includegraphics[scale=.5]{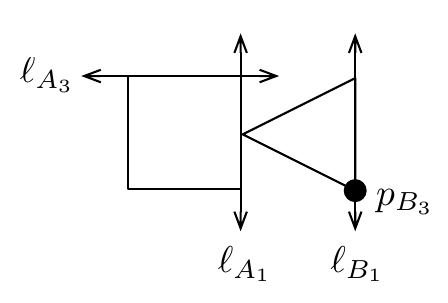}\end{minipage}\label{fig:polyExample2DC3generic}}
\caption{A 2D body-and-cad framework with different geometric constraints but the same rigidity matrix.}
\label{fig:poly2Dexample}
\end{figure}

Now consider the framework in Figure \ref{fig:poly2Dexample}; it is composed of the same two
rigid bodies as in Figure \ref{fig:simple2Dexample},
but only includes two cad constraints, a {\bf line-line distance} and {\bf point-line distance}.
Because Constraint I ({\bf line-line distance}) is equivalent to Constraints 1 and 2
({\bf point-line coincidence} and {\bf line-line perpendicular}), this system of constraints is %
equivalent to the figure in Figure \ref{fig:simple2Dexample}
and is {\em generically minimally rigid}.

\subsubsection{Combinatorial model.}
As discussed above (refer to \figref{poly2Dexample}), a single CAD constraint
may impose restrictions on multiple degrees of freedom in a framework.  Thus, the combinatorial
representation for a body-and-cad framework is a \emph{cad graph}
which has a vertex for each body
and a labeled edge for each cad constraint (refer to the top of Figure \ref{fig:2DexComb}
for the cad graphs for the examples in Figures \ref{fig:simple2Dexample} and \ref{fig:poly2Dexample}).
\begin{figure}
\subfigure[The {\em cad graph} (top) for {\bf Example 1} from Figure \ref{fig:simple2Dexample} has three edges between the vertices for bodies $A$ and $B$. The {\bf line-line perpendicular} constraint is associated to a primitive angular constraint, so the {\em primitive cad graph} (bottom) has one red edge.] {\begin{minipage}[b]{.48\linewidth}\centering\includegraphics[width=.9\linewidth]
{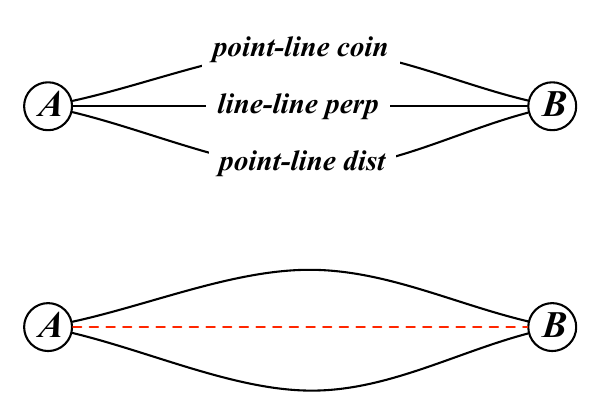}\end{minipage}\label{fig:simpleExample2Dcombinatorics}}
\hfil
\subfigure[The {\em cad graph} (top) for {\bf Example 2} from Figure \ref{fig:poly2Dexample} has two edges between the vertices for bodies $A$ and $B$. Because the {\bf line-line distance} constraint is associated to two primitive cad constraints (one blind, one angular), the {\em primitive cad graph} (bottom) has three edges.]{\begin{minipage}[b]{.48\linewidth}\centering\includegraphics[width=.9\linewidth] {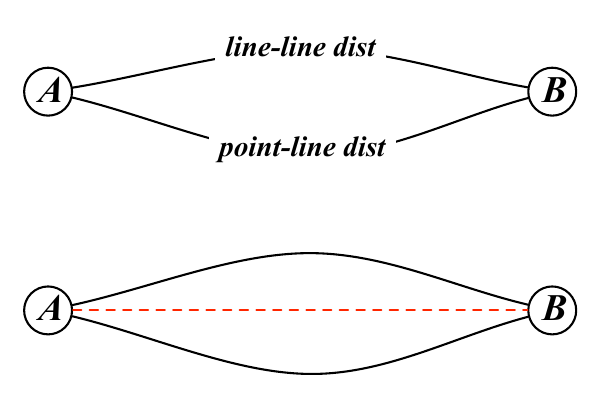}\end{minipage}\label{fig:polyExample2Dcombinatorics}}
\caption{Combinatorics of body-and-cad frameworks in Figures \ref{fig:poly2Dexample}
and \ref{fig:poly2Dexample}: both {\em cad graphs} (top) are associated to the same {\em primitive cad graph} (bottom).}
\label{fig:2DexComb}
\end{figure}

Associated to each cad graph is a  {\em primitive cad graph}, which
is a {\em bi-colored graph} $G = (V, E = R \sqcup B)$
on vertex set $[n]=\{1, \ldots, n\}$
with a vertex for each rigid body,
a red edge (in $R$) for each angular constraint,
and a black edge (in $B$) for each blind constraint.
Notice that the primitive
cad graphs in Figures \ref{fig:simple2Dexample} and \ref{fig:poly2Dexample} are the same
(refer to the bottom of Figure \ref{fig:2DexComb}).

In the rest of this paper, we will work with primitive cad graphs.

\subsubsection{The rigidity matrix and infinitesimal rigidity.}
As is standard in the field, we will linearize the
geometric constraint equations and consider \emph{infinitesimal rigidity}.  Here, the
core object of study is a \emph{rigidity matrix} (derived in \cite{halleretal}),
whose kernel consists of the infinitesimal motions of the framework.

To describe body-and-cad rigidity matrices combinatorially, we use the
following concept\footnote{The $[a,b]$-frame
defined here is equivalent to the $(a+b,a)$-frame defined in \cite{stjohnSidman}.}.
\begin{definition}
For integers $a,b$, let $k=a+b$.
We define an {\em $[a,b]$-frame $G(\bp)$} to be a bi-colored graph $G = (V, E = R \sqcup B)$ with $kn-k$ edges,
along with a function $\bp:E \to \RR^k$.
The function $\bp$ labels each edge with a $k$-vector, which
\ASJcomment{changed to match what we do later; put a coords at beginnign}
is zero in the last $b$ entries if the edge is in $R.$
The {\em generic $[a,b]$-frame $G(\bx)$} has formal indeterminates replacing the nonzero coordinates of
$\bp.$
\end{definition}

We define the rigidity matrix in terms of $[a,b]$-frames.  We first fix some ordering on the edges of $G$.
\JScomment{I think we do need to say that the edges of $G$ are ordered, but it really doesn't matter what it
is, so I am loathe to specify it explicitly.}
\begin{definition}\label{def: rigidityMatrix}
The {\em rigidity matrix} $M(G(\bp))$ of an $[a,b]$-frame $G(\bp)$ is a matrix that has $k$
columns for each vertex $i$ and one row for each edge of $G.$ \JScomment{We had a ``loop" in here, which I deleted.  I think that was a vestige of an earlier version.}
In the row corresponding to
an edge $e$ with endpoints $i$ and $j$ (where $i < j$),
we have $\bp(e)$ in the columns corresponding to $i$, $-\bp(e)$ in the columns corresponding to $j$,
and zeroes in all other entries.  Order the rows of the rigidity matrix in the order of the edges of $G.$
\end{definition}
\ASJcomment{modified to explicitly refer to generic frame}
\begin{definition}\label{genMinRigid}
We say that an $[a,b]$-frame $G(\bp)$ is {\em generically minimally rigid} if the associated
generic $[a,b]$-frame $G(\bx)$ has a rigidity
matrix $M(G(\bx))$ with rank $kn-k.$
\end{definition}

The generic rigidity matrix for the example in Figure \ref{fig:poly2Dexample}  is shown below.
$$\bordermatrix{& A_1 &A_2 &A_3 &B_1 &B_2 &B_3 \cr
\text{line-line distance (blind part)} & x_1 & x_2 & x_3 & -x_1 & -x_2 & -x_3 \cr
\color[rgb]{1,0,0}
\text{line-line distance (angular part)}  & \color[rgb]{1,0,0} y_1 & \color[rgb]{1,0,0} 0 &\color[rgb]{1,0,0} 0 & \color[rgb]{1,0,0} -y_1 &\color[rgb]{1,0,0}  0 & \color[rgb]{1,0,0}0  \cr
\text{point-line distance (non-generic)} & z_1 & z_2& z_3 & -z_1 & -z_2 & -z_3 }$$

It has 3 columns for each body, corresponding to
the one rotational and two translational degrees of freedom for infinitesimal rigid
body motion in the plane. We order the columns so that they are in groups of 3, with
translational
components last: column $A_1$ corresponds to the rotational component, and
columns $A_2$ and $A_3$ to the translational components, with body $B$'s columns ordered analogously.
There is a row for
each  primitive constraint; notice that that row for the primitive angular constraint associated to the {\bf line-line distance} constraint has
zeroes (highlighted in red) in the columns corresponding to the translational degrees of freedom.

\subsection{Minimally rigid graphs}\label{sec:combDef}
A result from \cite{stjohnSidman}
gives a combinatorial characterization of generic minimal rigidity for 2D body-and-cad frameworks
(with $[1,2]$-frames) and, omitting point-point coincidence constraints, for 3D body-and-cad
frameworks (with $[3,3]$-frames):
\begin{theorem}\theolab{bodyCadChar}
An $[a,b]$-frame with bi-colored graph $G = (V, E = R \sqcup B)$
is generically minimally rigid if and only if $\exists B' \subseteq B$ such that:
\begin{itemize}
\item $(V,R \cup B')$ is the edge-disjoint union of $a$ trees, and
\item $(V,B \setminus B')$ is the edge-disjoint union of $b$ trees
\end{itemize}
\end{theorem}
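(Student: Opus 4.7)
The plan is to analyze $C_G$ as the determinant of the square $kn\times kn$ matrix $M_T(G(\bx))$ and to exploit the structural fact that edges in $R$ have labels supported only in the first $a$ of the $k=a+b$ coordinates at each endpoint. Since the nonzero entries of $M_T(G(\bx))$ are distinct indeterminates (up to sign), $C_G\not\equiv 0$ if and only if the Laplace expansion of $\det M_T(G(\bx))$ along the groups of $k$ columns corresponding to the vertices produces at least one nonvanishing term. Each such term corresponds combinatorially to an assignment of $k$ rows of $M_T(G(\bx))$ (edges or tie-down rows) to the $k$ columns of each vertex.

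For the sufficiency direction, assume a $B'$ as in the statement exists. I would split the $k$ columns at each vertex $i$ into an ``angular block'' consisting of the first $a$ columns and a ``blind block'' consisting of the last $b$ columns, root each of the $a$ trees in $(V,R\cup B')$ and each of the $b$ trees in $(V,B\setminus B')$ arbitrarily, and orient every edge toward its root, so that each non-root vertex receives exactly $a+b=k$ incident edges. I would assign the incoming edges of the $(V,R\cup B')$ forest to the $a$ angular columns and the incoming edges of the $(V,B\setminus B')$ forest to the $b$ blind columns. Because red edges vanish in the last $b$ coordinates, they must live in the angular block, and the containment $R\subseteq R\cup B'$ makes this consistent; the $a+b$ tree roots are covered by the $k$ rows of the tie-down. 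I would then argue that this assignment yields a Laplace term that is a product of nonvanishing brackets in the indeterminates, whose monomial involves a distinct subset of the indeterminates and therefore cannot cancel against any other Laplace term.

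For necessity I would invoke matroid union. Let $\mathcal{M}_a$ denote the $(a,a)$-matroid on $R\sqcup B$ and $\mathcal{M}_b$ the $(b,b)$-matroid on $B$ alone. A $B'$ as in the statement exists if and only if $E$ is the disjoint union of a basis of $\mathcal{M}_a$ and a basis of $\mathcal{M}_b$; by the matroid union theorem (together with the edge count $|E|=(an-a)+(bn-b)$) this is equivalent to the absence of a subset $S\subseteq E$ witnessing a rank deficiency. I would prove the contrapositive: if no such $B'$ exists, a counting-deficient vertex subset produces a subset of rows of $M_T(G(\bx))$ whose entries are linearly dependent over the rational function field in the entries of $\bx$. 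This dependence forces $C_G\equiv 0$ by the standard argument used by White and Whiteley \cite{whiteWhiteley} in the uncolored body-and-bar setting, suitably adapted so that the color restriction on $R$ is compatible with the dependency.

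The main obstacle is the necessity direction. The sufficiency half is essentially Nash--Williams/Tutte tree packing plus Laplace expansion, parallel to Tay's theorem for body-and-bar frameworks, with the only new content being that the color restriction on $R$ aligns cleanly with the two-block column structure. The difficulty on the necessity side lies in establishing the correct two-colored matroid-union statement and, more delicately, translating a sparsity obstruction into an honest generic rank drop of $M_T(G(\bx))$ over the polynomial ring rather than over some specialization; degenerate cases such as $R=\emptyset$, $B=\emptyset$, or trees collapsing to isolated vertices will also need some bookkeeping.
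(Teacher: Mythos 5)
The paper does not prove this statement; it is imported verbatim from \cite{stjohnSidman}, so there is no internal proof to compare against. Judged on its own terms, your outline has the right shape (Laplace expansion for sufficiency, a matroid-theoretic obstruction for necessity) but two real problems. First, your non-cancellation claim is false as stated: distinct Laplace terms do \emph{not} involve distinct sets of indeterminates, because the row of an edge $e$ carries $\bx(e)$ in the columns of one endpoint and $-\bx(e)$ in the columns of the other, so the same indeterminates $\bx(e)_j$ appear no matter which endpoint $e$ is assigned to, and two different $k$-fan diagrams can contribute the same monomial with opposite signs. The correct bookkeeping is the tree-decomposition expansion (the paper's \theoref{bodycad-det-form}, \eqref{ab-trees}, following Whiteley's matroid-union determinant formula): after expanding all brackets, the surviving monomials of $C_G$ are exactly $\pm\prod_i\prod_{e\in T_i}\bx(e)_i$ over $[a,b]$-tree decompositions, and these are pairwise distinct because the monomial records which coordinate index, hence which tree, each edge belongs to. Your sufficiency argument should be rerouted through that formula; the one genuinely colored ingredient, which you do identify correctly, is that a red edge can only contribute a coordinate with index $\le a$, so red edges are forced into the first $a$ trees, and at most $a$ of the $k$ rows in any vertex minor are zero in the last $b$ columns.

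Second, the necessity direction is where you admit the difficulty, but it is not actually hard once you have the tree-decomposition expansion: if $C_G\not\equiv 0$ it has a monomial, that monomial is a tree-decomposition monomial, the red edges lie in trees $1,\dots,a$ by the support restriction, and taking $B'$ to be the black edges in those trees finishes the proof. The matroid-union/deficiency detour you propose is a legitimate alternative but forces you to prove the hard translation from a counting violation to a generic rank drop of $M_T(G(\bx))$, which is precisely the step you leave unaddressed. As written, the proposal has a gap in both directions that the same single lemma (the determinant-as-sum-over-tree-decompositions identity) would close.
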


\theoref{bodyCadChar} can also be stated in terms of \emph{hereditary sparsity}, which we now recall.
A multigraph $G = (V,E)$ is {\em $(k,\ell)$-sparse}
if every subset of $n'$ vertices spans
at most $kn'-\ell$ edges; if in addition, $G$ has exactly $kn-\ell$ edges, it is called {\em $(k,\ell)$-tight}.
For brevity, $(k,\ell)$-tight graphs will be called {\em $(k,\ell)$-graphs}.
A subset of vertices of $G$ that induces a $(k,\ell)$-graph is a {\em $(k,\ell)$-block}.
When $\ell\in [0,2k)$, $(k,\ell)$-graphs are the bases of the {\em $(k,\ell)$-matroid} \cite{leeStreinu}.

\begin{definition}
Let $G = (V, E = B \disjointUnion R)$ be a bi-colored graph, $a,b$ be positive integers, and $k = a+b$.
Then $G$ is {\em $[a,b]$-sparse}\footnote{As we
will rely on both concepts of sparsity, we draw the
reader's attention to the use of {\bf parentheses} to denote the parametrized
$(k,\ell)$-sparsity of simple graphs (appearing for classical bar-and-joint and body-and-bar rigidity)
and the use of {\bf square brackets} to denote the parametrized $[a,b]$-sparsity
counts of bi-colored graphs (introduced for body-and-cad rigidity).
}
if $\exists B' \subseteq B$ such that:
\begin{itemize}
\item $(V,R \cup B')$ is $(a,a)$-sparse, and
\item $(V,B \setminus B')$ is $(b,b)$-sparse
\end{itemize}
Additionally, if $G$ has exactly $kn-k$ total edges, then
$G$ is {\em $[a,b]$-tight} and referred to as an {\em $[a,b]$-graph}.
\end{definition}
A subset of vertices of an $[a,b]$-sparse graph that induces an $[a,b]$-graph is
an \emph{$[a,b]$-block}.

The Nash-Williams and Tutte Theorem \cite{tutte,nashWilliams} states that
$G$ is a $(k,k)$-graph if and only if $G$ is the edge-disjoint union
of $k$ spanning trees.
Therefore,
the $[a,b]$-graph
property
is equivalent to there existing $B' \subseteq B$ such that $(V, R \cup B')$ is the edge-disjoint union
of $a$ spanning trees and $(V, B \setminus B')$ is the edge-disjoint union of
$b$ spanning trees.

That this class is matroidal follows from the Matroid Union Theorem \cite[Prop. 7.6.14]{B86}.
Therefore, for an $[a,b]$-sparse graph $G=(V,E)$ and an edge $e$ not in $E$, we will say that
$e$ is {\em independent} of $G$ if $G+e$ is also
$[a,b]$-sparse and {\em dependent} otherwise.

\subsection{Non-genericity}
While the rank of a generically minimally rigid framework is $kn-k$ for almost all realizations, there are realizations for which the rank drops.  These correspond
to non-generic realizations, or {\em special positions}, of the generically minimally rigid graph.

\begin{figure}[bth]
\subfigure[As in Figure \ref{fig:poly2Dexample}, we specify a {\bf point-line distance} constraint that requires point $p_{B_3}$ on the triangle to be a fixed distance from line $\ell_{A_3}$ on the square. The resulting (generic) framework is minimally rigid.]{\begin{minipage}[b]{.3\linewidth}\centering\includegraphics[scale=.5]{polyExample2DC3generic}\end{minipage}\label{fig:polyExample2DC3genericAgain}}
\hfil
\subfigure[An alternative choice of the line on body $A$ involved in Constraint II.]{\begin{minipage}[b]{.3\linewidth}\centering\includegraphics[scale=.5]
{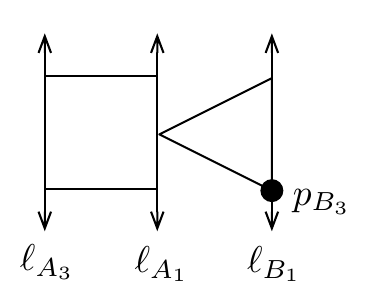}\end{minipage}\label{fig:polyExample2DC3nongeneric}}
\hfil
\subfigure[The resulting flexible non-generic framework admits an internal motion with 1 degree of freedom.]{\begin{minipage}[b]{.3\linewidth}\centering\includegraphics[scale=.5]
{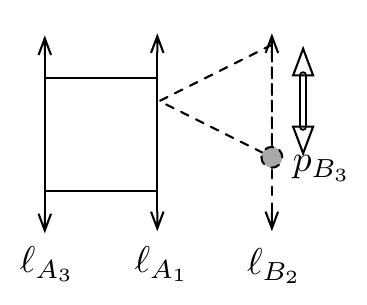}\end{minipage}\label{fig:simple2DexampleNonGenericC3motion}}
\caption{A {\bf flexible special position} of a generically minimally rigid 2D body-and-cad framework.}
\label{fig:simple2DexampleNongeneric}
\end{figure}
In Figure \ref{fig:simple2DexampleNongeneric} we consider
a special position of the framework specified by the graph from Figure \ref{fig:poly2Dexample}.
By changing the choice of placement for the line on body $A$ in the final constraint, we have constructed a non-generic realization of the
framework which has a a consistent dependency and remains flexible.

Its rigidity matrix is shown below and
contains a dependency (the third row is the sum of the first two),
causing its pure condition to vanish.
$$\bordermatrix{& A_1 &A_2 &A_3 &B_1 &B_2 &B_3 \cr
\text{point-line coincidence} & 1& 0 & -1  & -1 & 0 & 1  \cr
\color[rgb]{1,0,0} \text{line-line perpendicular} & \color[rgb]{1,0,0}-1 & \color[rgb]{1,0,0} 0 & \color[rgb]{1,0,0} 0  &\color[rgb]{1,0,0} 1 & \color[rgb]{1,0,0} 0 & \color[rgb]{1,0,0} 0  \cr
\text{point-line distance (non-generic)}& 0 & 0 & -1  & 0 & 0 & 1}$$

\section{Structure of the pure condition}\seclab{PC}
In this section, we review the definition of a polynomial called the pure condition
that is associated to a body-and-cad framework.  If
$G$ is a minimally rigid graph and $\bp$ is generic, the
kernel of $M(G(\bp))$ contains exactly
the space of trivial infinitesimal motions of of $G(\bp)$, corresponding
to rigid-body motions of the entire framework as a single unit.  To remove these, we choose some
$i$ with $1\leq i \leq n,$ and construct
the \emph{standard tie-down} at body $i$
by  appending   to $M(G(\bp))$ a $k \times kn$ matrix whose
only nonzero entries are given by the identity matrix in the $k$ columns associated to body $i$.\footnote{The
notion of tie-downs can be substantially generalized to generic tie-downs of any
$[a,b]$-sparse graph.  For our purposes, this would only complicate the notation, so we
restrict our attention to standard tie-downs.}
We denote the rigidity matrix of $G(\bp)$ with a tie-down
by $M_T(G(\bp))$.

\ASJcomment{do we need a figure here? Like, one to show trees versus fans and associated pure condition?}
\begin{definition}
The \emph{pure condition} $P_G$ of a tied down $[a,b]$-graph $G$ is the determinant of
$M_T(G(\bx))$.
\end{definition}
\ASJcomment{do we want to say something about P_G being a polynomial}

The pure condition depends, a priori, on the choice of tie-down of $G$.  We will show that as in
the body-bar setting of \cite{whiteWhiteley}, this dependence can be removed.
\begin{theorem}\theolab{pure-condition-properties}
The pure condition of a tied down $[a,b]$-graph $G$ is non-zero for \emph{any} choice\footnote{The standard tie-down pins $k$ coordinates of one body, but we can also choose
``generic'' tie-downs that pin $k$ coordinates chosen from different bodies.}
of the tie down and has the form
$P_G = T_G\cdot C_G$, where $T_G$ depends on the tie down and $C_G$ is independent of the tie down.
\end{theorem}
This result follows from two combinatorial formulas for the pure condition that we
derive below.  Since the proofs are relatively standard and similar to what
can be found in \cite{whiteWhiteley,W88}, we develop them quickly.

\subsubsection*{Tree decompositions.}
Let $G$ be an $[a,b]$-graph.  A \emph{tree decomposition} $\mathcal{T} = (A_1,\ldots,A_a,T_1,\ldots,T_b)$
of $G$ is a partition of the edges into $k$ spanning trees such that all the red edges are in
the sets $A_i$.  It follows from the definitions in Section \ref{sec:combDef}
that there is a tree decomposition of a bi-colored
graph if and only if it is an $[a,b]$-graph.

To make the connection to the pure condition we %
define the \emph{tree decomposition monomial} $\{\mathcal{T}\}$ to be
\[
\{\mathcal{T}\} :=	\prod_{e \in A_i, \ i\in [a]}\bx(e)_i
\prod_{e \in T_j, \ j\in [b]}\bx(e)_{a + j}
\]

\noindent The tree monomials are precisely the monomials appearing in $P_G.$ We first give a concrete example
before stating the general theorem.

\ASJcomment{moved example before theorem}
\begin{ex}\label{ex:22}
\JScomment{We may ultimately want a more complicated example, but this one already does display many of the features that we encounter and it is small.}
\ASJcomment{We have mixed up which coordinates are 0. In Section 2, we have the first b coords as 0, but I think we keep using the last b as 0... I lean towards changing Section 2 to be the last b coords as 0. I'm going to do that now...}
Let $G$ be the $[2,2]$-graph on vertices 1 and 2 with two red edges and two black edges depicted in
Figure \ref{fig:basic22}.
\begin{figure}[htb]
\centering\includegraphics[scale=.5]{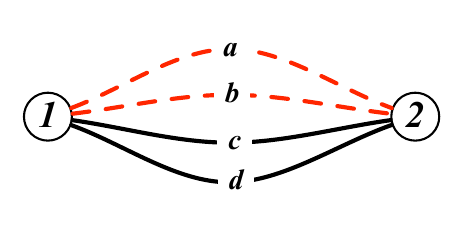}
\label{fig:basic22}
\caption{A $[2,2]$-tight graph on vertices $1$ and $2$ with two red (dashed) edges $a$ and
$b$ and two black (solid) edges $c$ and $d$.}
\end{figure}

\noindent The tied down rigidity matrix is
\[
\begin{pmatrix}
a_1 & a_2 & 0 & 0 & -a_1 & -a_2 & 0 & 0\\
b_1 & b_2 & 0 & 0 & -b_1 & -b_2 & 0 & 0\\
c_1 & c_2 & c_3 & c_4 & -c_1 & -c_2 & -c_3 & -c_4 \\
d_1 & d_2 & d_3 & d_4 & -d_1 &- d_2 &- d_3 &- d_4\\
1 & 0 & 0 & 0 & 0 & 0 & 0 & 0\\
0 & 1 & 0 & 0 & 0 & 0 & 0 & 0\\
0 & 0 & 1 & 0 & 0 & 0 & 0 & 0\\
0 & 0 & 0 & 1& 0 & 0 & 0 & 0\\
\end{pmatrix}
\]
There are 4 tree decompositions:
\[(\{a\},\{b\},\{c\},\{d\}), (\{b\},\{a\},\{c\},\{d\}), (\{a\},\{b\},\{d\},\{c\}), (\{b\},\{a\},\{d\}, \{c\}).\]
The corresponding monomials give the determinant of the rigidity matrix with a standard tie-down: $a_1b_2c_3d_4-a_2b_1c_3d_4 -a_1b_2c_4d_3+a_2b_1c_4d_3$.
\JScomment{Modifications here so that things fit in margin, looked good, and made sense.}

\end{ex}

\begin{theorem}\theolab{bodycad-tree-form}
If $G$ is a tied down $[a,b]$-graph
\begin{equation}
\eqlab{ab-trees}
P_G  =  \sum_{\substack{\text{tree decomps.} \\  \mathcal{T}}} \pm\{\mathcal{T}\} \qquad \text{in $\mathbb{R}[\bx]$}.
\end{equation}
The signs can be determined using the definition of the determinant of a matrix.  The precise formula for the signs is technical and not needed in what follows, so we do not describe it here.  %
\end{theorem}

\begin{proof}%
We sketch a proof that follows the proof of Theorem 2.18 in \cite{whiteWhiteley}.  First we reorder the
columns of $M_T(G(\bx))$ by the coordinates of $\bp(e)$ so that we have the $n$ first coordinates, followed
by the $n$ second coordinates, etc.  In doing this we can see that
if we ignore the last $k$ rows corresponding to the tie down, each successive collection of $n$ columns is
just an incidence matrix for the directed multigraph $G$ whose rows are the edges of $G$, and whose columns correspond to the vertices.

We expand the determinant of $M_T(G(\bx))$ along these successive groups of $n$ columns.  To do this,
in each set of $n$ columns,
we need to choose $n$ rows.  Since this set of $n$ columns is an incidence matrix, if this subdeterminant is nonzero, then these rows must
correspond to a spanning tree plus a row corresponding to the tie down.  Moreover, this subdeterminant
is actually a monomial as it is possible to expand it row by row, choosing the tie down as the first row, choosing
an edge incident to the tied-down vertex as the next row,
and choosing successive rows by taking an edge that was
adjacent to an edge already chosen.   Proceeding in this way, each row that we choose has only one
nonzero entry.

The resulting $n \times n$ determinant
is multiplied by $k$ others, all chosen in the same way, to obtain a monomial.  Since we cannot permit any row to appear twice
in such a product, this product of $k$ $n \times n$ determinants corresponds to a decomposition of
$G$ into $k$ edge-disjoint spanning trees.

Moreover, we argue that such a product is nonzero if and only if the $k$ trees form an $[a,b]$-tree decomposition.
To see this, note that within the last $b$ groups of $n$ columns, the rows corresponding to red edges have only
zeroes as entries.  So, all of the red edges are in the $a$ trees corresponding to the first $a$ coordinates.

\end{proof}

In particular, since the tree decompositions are independent of the tie-down, $P_G$ is
determined up to sign.
\JScomment{I think that if we meditate on it, it's clear that all of the signs undergo the same change.}
We define the \emph{critical factor}\footnote{The name comes
from the setting of general tie-downs, where $\pm 1$ is
replaced with the determinant of a $k\times k$ matrix.} $C_G$ to be $P_G$ with respect to the
tie-down that pins vertex $1$, to establish a convention.

\subsubsection*{Fans.}
An alternative combinatorial formula generalizes one based on the $k$-fans of
\cite{whiteWhiteley}.  Fix the ordering and base orientation of the edges as in Definition
\ref{def: rigidityMatrix}, where edges are oriented from $i$ to $j$ if $i<j$.
The ordering on the edges induces an ordering on subsets.  %
For a tied-down $[a,b]$-graph, an
$[a,b]$-fan $\mathcal{F}$ is an orientation of the edges such that: (a) all edges incident
to the tied-down vertex $i$ are oriented toward $i$; (b) at all other vertices,
exactly $k$ edges are oriented out, at most $a$ of which are red.  Define $F_j$
to be the set of edges oriented out of vertex $j$.  The sign $\varepsilon(\mathcal{F})$
is $(-1)^t$, where $t$ is the number of edges oriented opposite to the base orientation
by $\mathcal{F}$.  We define the \emph{fan monomial} $[\mathcal{F}]$ by
\[
[\mathcal{F}] :=\operatorname{sgn}(\sigma)\varepsilon(\mathcal{F})\prod_{j\in [n]\setminus \{i\}} [F_i]
\]
where $[F_j]$ is the \emph{bracket} associated to the (ordered)
set of vectors $\bx(e)$ for $e\in F_i$ and $\sigma$ is the permutation of the
edges that puts them in the order $F_1,F_2,\ldots, F_n$.  The brackets
can be thought of either as elements of the
homogeneous coordinate ring of the Grassmannian, or simply as the
determinant of the $k\times k$ matrix with columns $\bx(e)$.

It is not hard to check the following, which is a generalization of Proposition 2.12 in \cite{whiteWhiteley}.
\begin{theorem}\theolab{bodycad-det-form}
For a tied-down $[a,b]$-graph, $G$, we have
\begin{equation}\eqlab{ab-kfans}
P_G = \sum_{\substack{\text{$[a,b]$-fans} \\ \mathcal{F}}} [\mathcal{F}].
\end{equation}
\end{theorem}

\begin{ex}[name={Example \ref{ex:22}, continued}]
If $G$ is the graph on two vertices in Figure \ref{fig:basic22}, and we tie down vertex 1, we see that there is a unique $[2,2]$-fan with all four edges pointing from vertex 2 to vertex 1.  Hence, $P_G=[abcd],$ and even in this very simple example, we can see how expressing $P_G$ in terms of brackets greatly simplifies notation.
\end{ex}

\section{Factors of the pure condition and factor graphs}\seclab{factors}
In this section, we investigate the structure of factors of the
(critical factor of the) pure condition of an $[a,b]$-graph.  The goal
is to identify these factors graph-theoretically.  This is more complicated
than in the body-and-bar setting where irreducible factors correspond to
rigid subgraphs.  In our case we will have some factors that correspond
to rigid $[a,b]$-subgraphs, some that correspond to rigid $(a,a)$ or
$(b,b)$-subgraphs, and others that cannot be interpreted as the pure
condition of any rigid subgraph.

For the rest of this section
$G$ will refer to an $[a,b]$-graph, so that $C_G$ is well-defined.
To further understand the structure of the pure condition, we may
contract subgraphs.
If $H$ is a subgraph of $G$, we write $G/H$ to denote the graph obtained by
deleting the edges of $H$ and identifying the vertices of $H$.

\begin{definition}
Let $G$ be an $[a,b]$-graph with pure condition $C_G = fg$.  We say that the \emph{edge support}
of the factor $f$ is the set $E_f$ of edges $e$ in $G$ such that some variable of $\bx(e)$
is in $f$.
\end{definition}
The supports of distinct factors define edge-disjoint subgraphs of $G$.

\begin{theorem}\theolab{factor-graphs-exist}
Let $C_G = fg$.  Then the edge supports of $f$ and $g$ are disjoint and partition $E(G)$.  Moreover,
every monomial of a factor contains exactly one coordinate from each edge in its support.
\end{theorem}
\begin{proof}
From \eqref{ab-trees} we can see that every monomial of $C_G$ contains exactly one coordinate from each edge.
If the coordinates of $\bx(e)$ were split between two distinct factors, then their product would have
terms divisible by more than one coordinate of $\bx(e),$ which would be a contradiction.
\end{proof}

\theoref{factor-graphs-exist} allows us to make the following definition.
\begin{definition}\label{proper}
A bi-colored graph $H$ is a \emph{factor graph} of $G$ if $H$ is a minor\footnote{Graph minors are
obtained by deleting edges and vertices and contracting edges.} of $G$ and $C_G = hf$, with
the factor $h$ supported on $H$.  The factor graph $H$ is \emph{proper}, if $h = C_H$ for some
tie down of $H$.  If $H$ is all red we abuse notation and consider it as an $(a,a)$-graph.  If $H$ is a black $(b,b)$-graph whose vertices are contained in a red $(a,a)$-graph, then we consider $H$ as a $(b,b)$-graph labeled by the last $b$ coordinates of the edge vectors.
\JScomment{modified this so that $J$ can be a proper factor graph in Theorem 8}
\end{definition}

\JScomment{Added a paragraph here.}
\begin{ex}[name={Example \ref{ex:22}, continued again}]\label{ex: proper}
Define $H$ to be the subgraph of $G$ containing the two red edges.  Then $H$ is a proper factor graph as it is the pure condition of this graph as a $(2,2)$-graph.  The black factor graph consisting of the other two edges is also proper, though its edges are now labeled by the 2-vectors $(c_3,c_4)$ and $(d_3,d_4).$

Here we note two key differences between $[a,b]$-graphs and $(k,k)$-graphs and discuss their consequences.  First, in the pure condition of a $(k,k)$-graph, if $e$ is an edge, every coordinate of $\bx(e)$ appears in some monomial in the pure condition.  In this example, $c_1, c_2, d_1,$ and $d_2$ do not divide any monomial.   Second, in the $(k,k)$-graph setting, each irreducible factor is the pure condition
of a $(k,k)$-graph. In our example, $P_G$ factors as $(a_1b_2-a_2b_1)(c_3d_4-c_4d_3),$ yet neither factor is the pure condition of an $[a,b]$-subgraph of $G.$

\end{ex}

This next theorem gives a first indication of why it is convenient to define factor graphs in terms of minors
instead of, e.g., subgraphs or subset of edges, as we see that $C_G$ may factor as the product of the
pure condition of a proper subgraph $H$
and the pure condition of $G/H,$ the minor obtained by contracting $H$.
The geometric meaning of \theoref{kk-factor} is that, if $H$ corresponds to a rigid
sub-framework (i.e., an $[a,b]$-subgraph), we can split the special positions of $G$ into two types:
(i) special positions of $H$, which induce special positions of $G$; (ii) special positions of
$G$ where $H$ is placed generically.
In case (ii), $H$ can be replaced by a single rigid body, so these special positions
are special positions of $G/H$.
\theoref{kk-factor} is a small modification of
\cite[Theorem 4.12]{whiteWhiteley} to body-and-cad.

\begin{theorem}\theolab{kk-factor}
Let $G$ have a proper subgraph $H$ that is an $[a,b]$-block.  Then $H$ and $G/H$ are both proper
factor graphs and $C_G = C_H\cdot C_{G/H}$.
\end{theorem}
The proof requires the following standard lemma.
\begin{lemma}\lemlab{kk-contract}
Let $G$ be a $(k,k)$-graph, and let $H$ be a proper block.  Then $G/H$ is also
a $(k,k)$-graph.
\end{lemma}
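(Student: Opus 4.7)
The plan is to verify the two defining properties of a $(k,k)$-graph for $G/H$: the correct edge count, and the sparsity condition on every vertex subset. Both will follow from a direct counting argument that ``lifts'' a vertex subset of $G/H$ back to a subset of $G$ and applies the sparsity of the original graph.

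First I would set notation: let $n = |V(G)|$ and $n_H = |V(H)|$, so that $V(G/H)$ has $n' = n - n_H + 1$ vertices (the vertices of $G$ outside $H$, together with a single contracted vertex $v_H$). The edges of $G/H$ are the edges of $G$ not contained in $H$, with endpoints in $V(H)$ replaced by $v_H$. The edge count is immediate:
\[
|E(G/H)| = |E(G)| - |E(H)| = (kn - k) - (kn_H - k) = k(n - n_H) = kn' - k,
\]
which matches the tight count.

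Next I would prove the sparsity condition. Given $S \subseteq V(G/H)$ with $|S| = n''$, consider two cases. If $v_H \notin S$, then the edges of $G/H$ induced by $S$ are exactly the edges of $G$ induced by $S$ (viewed as a subset of $V(G)$), and the bound follows from $(k,k)$-sparsity of $G$. If $v_H \in S$, set $S' = (S \setminus \{v_H\}) \cup V(H)$, so $|S'| = n'' - 1 + n_H$. Under the contraction, edges of $G/H$ induced by $S$ correspond bijectively to edges of $G$ induced by $S'$ that are not edges of $H$. Hence
\[
|E_{G/H}(S)| = |E_G(S')| - |E(H)| \leq (k|S'| - k) - (kn_H - k) = kn'' - k,
\]
where the inequality uses sparsity of $G$ and the equality uses tightness of $H$. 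This gives $(k,k)$-sparsity of $G/H$; combined with the edge count, $G/H$ is $(k,k)$-tight.

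There is no real obstacle here — the argument is purely arithmetic — but the one point worth flagging is that the contraction may produce multi-edges (parallel edges between $v_H$ and a common outside vertex), which is fine since $(k,k)$-graphs are by convention multigraphs. I would also note in passing that no self-loops are created at $v_H$ since the edges of $G$ with both endpoints in $V(H)$ are precisely the edges of $H$, which are deleted by the contraction.
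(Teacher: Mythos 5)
Your argument is correct. Note that the paper does not actually prove this lemma: it is stated with a reference to Lee and Streinu and no argument is given, so there is no ``paper proof'' to compare against. Your direct counting proof is a clean, self-contained verification: the edge count is exactly the arithmetic you give, and the sparsity check correctly splits on whether the contracted vertex $v_H$ lies in the subset, lifting $S$ to $S' = (S\setminus\{v_H\})\cup V(H)$ and using tightness of $H$ to cancel the $kn_H - k$ term. The one hypothesis you use implicitly and should make explicit is that a block is by definition a \emph{vertex-induced} subgraph (this is how the paper defines $(k,\ell)$-blocks); this is exactly what justifies both the identity $|E(G/H)| = |E(G)| - |E(H)|$ (every edge of $G$ inside $V(H)$ belongs to $H$, so no self-loops survive the contraction) and the bijection between edges of $G/H$ induced by $S$ and edges of $G$ induced by $S'$ not in $E(H)$. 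With that remark added, the proof is complete.
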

\begin{proof}
Fix a tree decompisition $T_1,\ldots,T_k$ of $G$, and let $n'$ be the number of
vertices of $H$.  Since $H$ has $m'=kn'-k$ edges, all of which are covered
by one of the trees, $|T_i\cap H| = n'-1$ for every $i$.  Thus, contracting
$H$ involves contracting a subtree of each of the $T_i$.
Since contracting a connected subtree of any tree $T$ produces a smaller tree
$T'$, contracting $H$ produces a set of $k$ trees that cover the edges of $G/H$.
By the Tutte-Nash-Williams Theorem $G/H$ is also a $(k,k)$-graph.
\end{proof}
\begin{proof}[Proof of \theoref{kk-factor}]
Since $H$ is an $[a,b]$-graph, it has a pure condition $C_H$.
\lemref{kk-contract} then implies that $G/H$ is also
an $[a,b]$-graph, so $C_{G/H}$ is defined as well.
If both $C_H$ and $C_{G/H}$ divide $C_G$, then
$C_G = C_G\cdot C_{G/H}$, since the supports of $H$ and $G/H$
are disjoint and partition the edges of $G$.

To see that $C_H$ and $C_{G/H}$ divide $C_G,$ we will use \eqref{ab-trees}, and a small
elaboration of the proof of \lemref{kk-contract}.  Pick tree decompositons
$\mathcal{T}_H$ and $\mathcal{T}_{G/H}$ of $H$ and $G/H$.
Suppose that the union $T$ of $T^H_1$ and $T^{G/H}_1$ contains a cycle in $G$.
Since $T^H_1$ is a tree, this cycle is not contained entirely in $H$.
This implies that $T/T^H_1 = T^{G/H}_1$  is not acyclic, which is a contradiction.

Because $\mathcal{T}_H$ and $\mathcal{T}_{G/H}$ were arbitrary, a tree decompisiton
of $G$ is determined by picking one of $H$ and $G/H$ independently.  The desired
result then follows from \eqref{ab-trees}.
\end{proof}
\color{black}

It is not too hard to see that for a body-and-bar framework every factor graph is proper.
The next few theorems indicate that body-and-cad is more complicated than body-and-bar.

\begin{theorem}\theolab{aa-factor}
Let $G$ be an $[a,b]$-graph with a subgraph $H$ that contains only red edges and is an $(a,a)$-block.  Then
$H$ is a proper factor graph of $G$.
\end{theorem}
\begin{proof}
Because $H$ is a completely red $(a,a)$-block, any tree decomposition
decomposition of $G$ induces a decomposition of $H$
into $a$ edge-disjoint spanning trees.  The trees covering $H$ can be chosen independently of
the trees covering the remaining edges, so the theorem follows from
\eqref{ab-trees}.
\end{proof}
Figures \ref{fig:example21recurse} and \ref{fig:factorAlgEx11} contain two examples with red $(a,a)$-blocks appearing as proper factor
graphs.  The geometric interpretation
of \theoref{aa-factor} is that as a sub-framework $H$ is not rigid, but \emph{angularly rigid}:
only translational
degrees of freedom are permitted between the vertices spanned by $H$.
In light of this, we define the following.
\begin{definition}
Let $G$ be $[a,b]$-sparse. If $H$ is a subgraph that contains only red edges and is
an $(a,a)$-block, we say that it is an \emph{angular block}.  If $G$ is an $[a,b]$-graph,
then an angular block is an \emph{angular factor graph}.
\end{definition}
A key difference between the situation in \theoref{kk-factor} and that of \theoref{aa-factor} is that
we cannot simply contract $H$ and obtain another proper factor graph.  Indeed, as the example
in \figref{example11improper} shows, $G\setminus H$ may be an \emph{improper} factor graph,
indicating that the distinction is necessary.

So far we have seen four types of proper factor graphs -- $[a,b]$-subgraphs, quotients of $[a,b]$-subgraphs, all red $(a,a)$-subgraphs and all black $(b,b)$-subgraphs (as in Example \ref{ex: proper}).  Before giving the
detailed context in which black $(b,b)$-subgraphs may be proper factor graphs, which is quite technical, we provide the geometric intuition.
\JScomment{Check this}
Refer back to the example in \figref{2DcontextuallyRigidComp}. Suppose that
$G$ contains an angular factor $H$ (parallel constraints in the example).
As previously described, the vertices spanned by $H$ ($A$, $B$ and $C$ in the example) are
``translating bodies'' relative to each other; then, a $(b,b)$-block $J$ on a subset of the vertices
of $H$ ($B$ and $C$ in the example) will, in fact, completely rigidify them.  This will hold
\emph{even if the vertex set of $J$ does not induce an $[a,b]$-block},
which implies that body-and-cad, even in in the plane, allows us to construct frameworks with
generically rigid components that are not rigid as induced subgraphs.
\begin{definition}
Let $G$ be an $[a,b]$-sparse graph with an angular factor graph $H$, and let
$J$ be a $(b,b)$-block on a subset of the vertices of $H$.
If $J$ does not span all the vertices of $H$, then $J$ is a \emph{contextually rigid component}.
\end{definition}
By analogy with the geometric reasoning used to identify the factor graph associated with an $[a,b]$-block,
we want to think of the vertices of $J$ as corresponding to a single rigid body.
While we previously continued by looking for proper factor graphs in $G/J$,
a more complicated construction is required when $J$ does not span an $[a,b]$-block .

\begin{theorem}\theolab{bb-factor}
Let $G$ be an $[a,b]$-graph with an all red $(a,a)$-block $H$ as a subgraph and an
all black $(b,b)$-block $J$ with $V(J) \subset V(H)$.  Tie down a vertex in $J$, fix
an $[a,b]$-tree decomposition $\mathcal{T}$ of $G$, and orient all edges towards this root.
Define $G'$ by removing all of the edges of $H$ whose tails are in $J$ and then contracting $J.$
\LTcomment{Check.} Then,
\begin{enumerate}
\item  $H$ and $J$ are proper factor graphs
\item  $C_G = C_H \cdot C_J \cdot f$, where $f$ is supported on a subgraph $F$ whose edges are the edges of $G$ that are not in $H$ or $J.$
\item  $G'$ is an $[a,b]$-graph with $C_{G'} = hf,$ where $h$ is supported on
the red $(a,a)$-block $H' = H / J$.
\end{enumerate}
\end{theorem}

\begin{proof}
That $H$ is a factor graph of $G$ follows from \theoref{aa-factor}.  To see that
$J$ also is, observe that,
since $V(J)\subset V(H)$, all the edges of $J$ are
covered by trees $T_j$  (and not $A_i$).  Thus, the decomposition of $J$ into $b$ trees
is independent of the rest of any tree decomposition.  Hence,
$C_J$ (the pure condition of a $(b,b)$-graph) divides $C_G$, using the same arguments as above.

To show that $G'$ is an $[a,b]$-graph, we will show that $\mathcal{T}$ restricts to an
$[a,b]$-tree decomposition of $G'.$  In each tree there was a unique directed path from each vertex
in $G$ to the root in $J$.  Since our construction only deletes edges directed out of $J$, there
is still a unique directed path in each tree from each vertex remaining in $G'$ to the root.  If an
undirected cycle were created in this process, it is easy to see that there would have also been
an undirected cycle in the original tree.  Therefore, the restriction of $\mathcal{T}$ to
$\mathcal{T}'$ in $G'$ is an $[a,b]$-tree decomposition.

Finally, we argue that $H' = H / J$ is a red $(a,a)$-block in $G'$.
This will imply that $H'$ is a factor graph of $G'$, yielding the last
assertion of the theorem.  Above, we argued that the $a$ red
trees of $\mathcal{T}$ in $H$ restrict to $a$ red trees in
$\mathcal{T}'$ in $H'$. By Nash-Williams-Tutte, $H'$ is a $(a,a)$-block (of all red edges).
By \theoref{aa-factor}, $H'$ is a proper factor graph
of $G'$ and $C_{G'} = hf$ with $h$ supported on
$H'$.
\end{proof}
\section{Algorithms for detecting dependencies}\seclab{algs}
We now present combinatorial algorithms to aid in detecting dependencies in CAD systems. The first
algorithm, the {\bf $[a,b]$-pebble game}, characterizes $[a,b]$-sparsity and consequently
addresses \emph{generic} body-and-cad rigidity. If the addition of an edge results in
a dependency in a generic realization of the system, the pebble game will find its fundamental circuit.
The second set of {\bf factor graph algorithms} determines the proper and improper \emph{factor graphs} for an
$[a,b]$-graph, which correspond to factors of its pure condition. When the geometry
of a CAD system causes any of these factors to vanish, the system is in a \emph{special position}
and contains a dependency.
\ASJcomment{is that dependency's ``circuit'' the associated factor graph?}

\subsection{Pebble games for $[a,b]$-sparsity}\seclab{pebblegame}\label{pebblegameSec}
Generic minimal rigidity of body-and-cad frameworks is characterized by
$[1,2]$-sparsity in the plane and, omitting point-point coincidence constraints,
$[3,3]$-sparsity in 3D \cite{stjohnSidman}.

\algref{abPG} describes our {\em $[a,b]$-pebble game}
for solving the {\bf Decision}, {\bf Extraction}, {\bf Components}
and {\bf Optimization} algorithmic problems described
in \cite{leeStreinu} for $[a,b]$-sparse graphs as well as detecting
the fundamental circuit of a dependent edge. This algorithm belongs
to a family of pebble game algorithms \cite{theranStreinu,leeStreinu}
that are based on a set of local moves applied to
the edges of a directed graph, where the edges and vertices are covered by
pebbles representing degrees of freedom. The specific preconditions
for each type of move,
which are related to the sparsity parameters,  determine
the sparsity family recognized by the
game.
\begin{algorithm}
\caption{The $[a,b]$-pebble game algorithm.}
\label{alg:abPG}
{\bf Input:} A bi-colored graph $G = (V, E=B \disjointUnion R)$, with black $B$ and red $R$ edges. \\
{\bf Output:} $[a,b]$-sparsity property {\bf tight},
{\bf sparse}, {\bf dependent and contains spanning tight}, or {\bf dependent}. \\
{\em {\bf Setup:} Initialize an empty directed graph $H$ on vertex set $V$.
On each vertex, place $a$ aqua pebbles and $b$ tan pebbles.}\\
{\bf Allowed moves:}
\begin{addmargin}[2em]{0em}%
\noindent\emph{\bf Add red edge $ij$}
[Precondition: $\geq a + 1$ aqua pebbles on $i$ and $j$.]\\
\emph{		  -- Add the new edge, cover it with an aqua pebble from i (there is one by the precondition).\\
-- Orient $ij$ out of $i$.}\\			%
\noindent{\bf Add black edge $ij$}
[Precondition: $\geq a + 1$ aqua pebbles on $i$ and $j$ or
$\geq b + 1$ tan pebbles on $i$ and $j$.]\\
\emph{-- Add the new edge; cover it with a pebble from $i$ using aqua (if there are $a+1$ aqua) or
tan (if there are $b+1$ tan).\\
-- Orient $ij$ out of $i$. }\\
\noindent\emph{\bf Edge reversal}
[Precondition: vertex $j$ has a pebble on it and an in-edge $ij$ covered by the same color.] \\
\emph{-- Reverse the edge by orienting it as $ji$ out of $j$, covering with the pebble from $j$ and returning the
(same color) pebble originally covering $ij$ to $i$.}\\
\noindent\emph{\bf Aqua exchange edge reversal}
[Precondition: vertex $j$ has an aqua pebble on it and a black in-edge $ij$ covered by a tan pebble;
$i$ and $j$ do not belong to the same $(a,a)$-component of aqua pebble covered edges.]\\
\emph{-- Reverse the edge by orienting it as $ji$ out of $j$, covering with the aqua pebble from $j$ and returning the tan pebble originally covering $ij$ to $i$.}\\
\noindent\emph{\bf Tan exchange edge reversal}
[Precondition: vertex $j$ has a tan pebble on it and a black in-edge $ij$ covered by an aqua pebble;
$i$ and $j$ do not belong to the same $(b,b)$-component of tan pebble covered edges.]\\
\emph{-- Reverse the edge by orienting it as $ji$ out of $j$, covering with the tan pebble from $j$ and returning the aqua pebble originally covering $ij$ to $i$.}
\end{addmargin}

\noindent{\bf Method:}
\begin{enumerate}
\item For each edge $e \in E$
\begin{enumerate}
\item If $e$ is black: attempt to collect $b+1$ tan pebbles on its endpoints
with Alg. \ref{alg:abPGfindPebble}.
\item If Alg. \ref{alg:abPGfindPebble} returns {\bf true}: insert $e$ with an {\bf add black edge} move.
\item Else, or if $e$ is red: attempt to collect $a+1$ aqua pebbles on its endpoints
with Alg. \ref{alg:abPGfindPebble}.
\item If Alg. \ref{alg:abPGfindPebble} returns {\bf true}: insert $e$ with an {\bf add black/red edge} move.
\item Else: {\em reject} it and
highlight the edges returned by Alg. \ref{alg:abPGfindPebble}
as the fundamental circuit of the edge (if $e$ is black, this is the union of
both calls to Alg. \ref{alg:abPGfindPebble}).
\end{enumerate}
\item If every edge is added: output {\bf tight} if there are $a+b$ pebbles left and {\bf sparse} otherwise.
\item Else, there were rejected edges: output {\bf dependent and contains spanning tight} if there are $a+b$ pebbles left
and {\bf dependent} otherwise.
\end{enumerate}
\end{algorithm}

\begin{algorithm}
\caption{The subroutine for finding pebbles for the $[a,b]$-pebble game.}
\label{alg:abPGfindPebble}
{\bf Input:} An $[a,b]$-pebble game configuration (a directed bi-colored graph),
an edge $e$, and a desired additional pebble color $c_{e}$ ({\bf aqua} or
{\bf tan}). \\
{\bf Output:} {\bf true} if $a+1$ aqua (if $c_{e}$ is aqua) or $b+1$ tan (if $c_e$ is tan)
pebbles can be collected on the endpoints of $e$ or
{\bf false} otherwise, along with the set of visited edges.\\
{\bf Method:}
\begin{enumerate}
\item Initialize set $F = \emptyset$.
\item Initialize queue $Q = \emptyset$.
Entries of $Q$ will be of the form $(f,c)$, recording
an edge on which to cover with a pebble of color $c$.
\item Set $e.predecessor = \hbox{NIL}$.
\item Enqueue $(e,c_{e})$ into Q.
\item While $Q$ is not empty
\begin{enumerate}
\item Dequeue $(f,c)$.
\item If $f \neq e$ and $f$ is red, continue to the next iteration of the loop.
\item Use the basic pebble game rules to try to collect $a+1$ (if $c$ is {\bf aqua})
or $b+1$ (if $c$ is {\bf tan}) pebbles on the endpoints of $f$;
let $F'$ be the set of edges visited by that search.
\item If the pebbles were collected
\begin{enumerate}
\item Let $g = f$.
\item While $g.predecessor \neq \hbox{NIL}$
\begin{enumerate}
\item Let $d$ be the color of the pebble covering $g$,
$\overline{d}$ be the opposite color, $u$ and $v$ the source and target
of $g$.
\item Collect a pebble of color $\overline{d}$ on $v$ using the basic
pebble game rules with {\bf edge reversal} moves.
\item Perform a {\bf $\overline{d}$ exchange edge reversal}
move to reverse the edge from $v$ to $u$,
covering it with the $\overline{d}$-colored pebble
and releasing a $d$-colored pebble back onto $u$.
\item Set $g = g.predecessor$.
\end{enumerate}
\item Collect $a+1$ (if $c$ is {\bf aqua})
or $b+1$ (if $c$ is {\bf tan}) pebbles on the endpoints of $g (= e)$.
\item Output {\bf true} and  $F \cup F'$.
\end{enumerate}
\item Otherwise
\begin{enumerate}
\item For each edge $g \in F'$ that is not in $F$
\begin{enumerate}
\item Set $g.predecessor = f$; let $\overline{c}$ be the opposite of color $c.$
\item Enqueue $(g, \overline{c})$ into $Q.$
\end {enumerate}
\item Assign $F = F \cup F'$.
\end{enumerate}
\end{enumerate}
\item Output {\bf false} and $F$.
\end{enumerate}
\end{algorithm}
\ASJcomment{the figure needs to be updated, probably to include corresponding Knuth graph}

One way of intuitively understanding the $[a,b]$-pebble game is to imagine separate
$(a,a)$- and $(b,b)$- pebble games played on sets $A$ and $T$,
which partition the current edge set, respectively. The aqua-colored pebbles
track the edges in $A$ as well as the $(a,a)$-sparsity of that partition;
the tan-colored pebbles do the same for $T$ with $(b,b)$-sparsity counts\footnote{We
chose the colors aqua and tan to be associated with $A$ and $T$; in the rigidity
setting, the aqua-covered $A$ partition can be thought of as tracking the angular
degrees of freedom of the system, while the tan-colored $T$ partition tracks
the translational degrees of freedom.}.
The $[a,b]$-pebble game relies on moves that permit black edges to move between $A$
and $T$ in a controlled manner, which corresponds to collecting additional pebbles of certain
colors, using a subroutine described in \algref{abPGfindPebble}.
To find a sequence of these moves, \algref{abPGfindPebble}
specializes Knuth's matroid union algorithm \cite{Knuth:1973:MP:891978}
to the $[a,b]$-sparsity matroid using pebble games.
By enqueuing unvisited edges (in $F' \setminus F$), it uses a
breadth-first approach to find the shortest path (stored with \emph{predecessor} pointers)
to an edge whose pebble color can be exchanged.
To help illustrate the algorithm, \figref{pgExample} shows some
steps of the pebble game played on the primitive cad graph
for \figref{gen3Body2D}.

\begin{figure}[tbh]
\subfigure[The input is a primitive cad graph with 5 black (solid) edges and 1 red (dashed) edge.]{\begin{minipage}[b]{.238\linewidth}\centering\includegraphics[width=\linewidth]{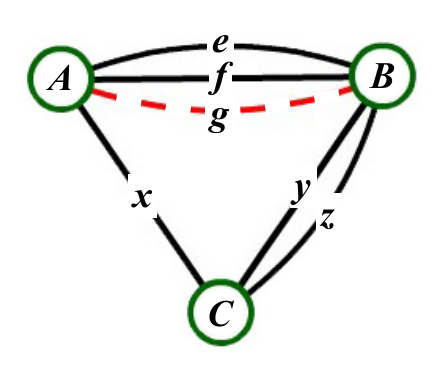}\end{minipage}\label{fig:abPGinput}}
\hfil
\subfigure[The setup stage places $a=1$ aqua (circular) pebble and $b=2$ tan (square) pebbles on each  vertex.]{\begin{minipage}[b]{.238\linewidth}\centering\includegraphics[width=\linewidth]
{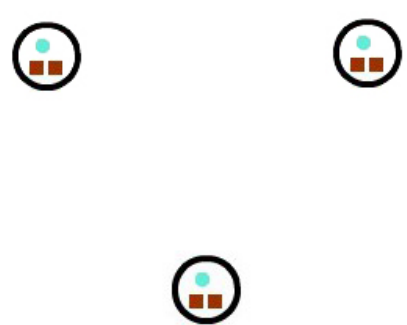}\end{minipage}\label{fig:abPG1}}
\hfil
\subfigure[Since there were at least $b+1 = 3$ tan pebbles on its endpoints, the black edge $e$ is inserted with an {\bf add black edge} move.]{\begin{minipage}[b]{.238\linewidth}\centering\includegraphics[width=\linewidth]
{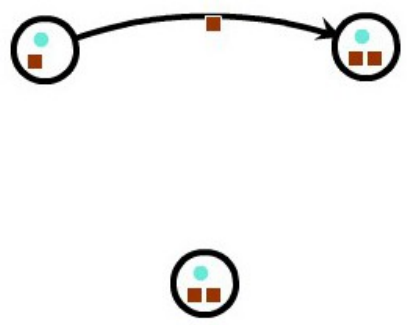}\end{minipage}\label{fig:abPG2}}
\hfil
\subfigure[Another {\bf add black edge} move inserts the edge $f$. While the direction is arbitrarily chosen, a pebble from the source is used to cover the edge.]{\begin{minipage}[b]{.238\linewidth}\centering\includegraphics[width=\linewidth]
{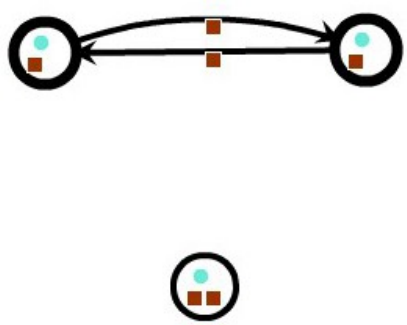}\end{minipage}\label{fig:abPG3}}\\
\subfigure[Since there were $a+1 = 2$ aqua pebbles on its endpoints, the red edge $g$ is inserted with an {\bf add red edge} move.]{\begin{minipage}[b]{.238\linewidth}\centering\includegraphics[width=\linewidth]
{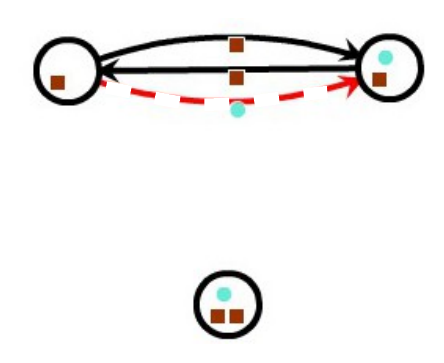}\end{minipage}\label{fig:abPG4}}
\hfil
\subfigure[A few moves later, an {\bf aqua edge exchange reversal} move is possible: $B$ has an aqua pebble and a tan pebble covered black in-edge $z$, whose endpoints are not in a (1,1)-component.]{\begin{minipage}[b]{.238\linewidth}\centering\includegraphics[width=\linewidth]
{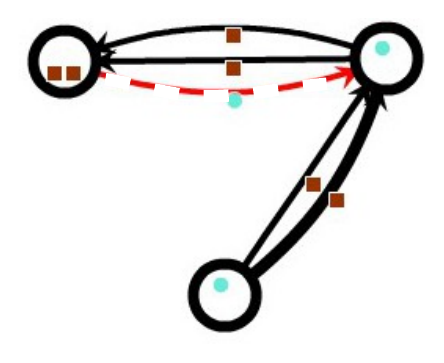}\end{minipage}\label{fig:abPG5}}
\hfil
\subfigure[The edge $z$ is subsequently reversed, covered by the aqua pebble from $B$, releasing a tan pebble onto $C$.]{\begin{minipage}[b]{.238\linewidth}\centering\includegraphics[width=\linewidth]
{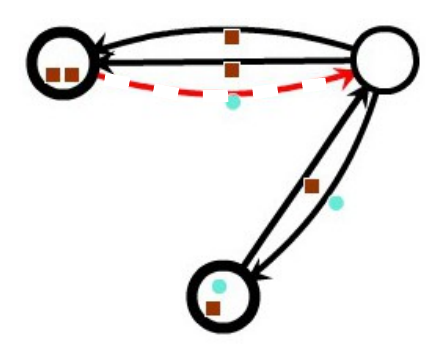}\end{minipage}\label{fig:abPG6}}
\hfil
\subfigure[Finally, all edges are successfully inserted with exactly $a+b=3$ pebbles remaining; the output is {\bf tight}.]{\begin{minipage}[b]{.238\linewidth}\centering\includegraphics[width=\linewidth]
{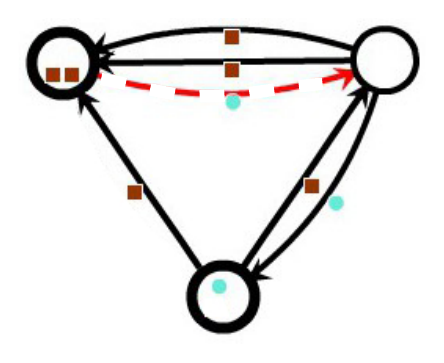}\end{minipage}\label{fig:abPG7}}
\caption{The $[a,b]$-pebble game (Algorithm \ref{alg:abPG}) played on a primitive cad graph for a generically minimally rigid framework determines that it is $[1,2]$-tight.}
\label{fig:pgExample}
\end{figure}

\subsection{Correctness}
We now show correctness of Algorithm \ref{alg:abPG}.
\begin{theorem}\theolab{pebblegame}
A bi-colored graph is $[a,b]$-sparse if and only if it can be constructed with the $[a,b]$-pebble game.
\end{theorem}

We are going to prove that \algref{abPG} correctly characterizes $[a,b]$-sparse graphs
and that it can be used to find circuits.  Structurally, the search procedure
in \algref{abPGfindPebble} corresponds to Knuth's \cite{Knuth:1973:MP:891978} algorithm
for matroid union (see \cite[Sec. 42.3]{S03b} for a modern treatment).  Thus, one can infer correctness
once this equivalence is established.  However, for clarity, we will first prove
that any graph constructed with the moves described in \algref{abPG} is $[a,b]$-sparse.

In what follows, we describe a pebble game configuration as $(H,A,T)$, where
$H$ is a bi-colored directed graph on vertex set $V$, with a pebble covering every
edge and some free pebbles on vertices; $A$ is the set of edges covered by
aqua pebbles and $T$ is the set of edges covered by tan pebbles. We will
also abuse notation slightly and use the same symbols $H$, $A$,
and $T$ to describe
their underlying undirected graphs. Finally, we will use the notation
$S + e$ to denote $S \cup \{e\}$ and $S - e$ to denote $S \setminus \{e\}$.

\begin{lemma}\lemlab{pgIsInd}
The underlying graph of any pebble game configuration is $[a,b]$-sparse
with $A$ as an $(a,a)$-sparse graph and $T$ as a $(b,b)$-sparse graph.
\end{lemma}
\begin{proof}
We show something slightly stronger, which is that the underlying graph $H$
constructed by applying \emph{any sequence} (as opposed to only the
ones found by the algorithm)
of the pebble game moves is always $[a,b]$-sparse.

The key invariant is that, after any sequence of moves, $A$ and $T$ both
induce pebble game configurations for the \emph{basic} (uncolored) pebble game from
\cite{leeStreinu}.
As an immediate
consequence, we obtain
that $A$ remains $(a,a)$-sparse and $B$ remains $(b,b)$-sparse.  This certifies that
$H$ is $[a,b]$-sparse.  The invariant clearly holds at initialization, so
we proceed by induction on the number of moves.

For the inductive step, we first consider
all the moves except for
the {\bf exchange edge reversal} moves.
We observe that,
assuming the required preconditions,
these operate entirely on either $A$ or $T$
as a configuration, so the inductive step for them follows directly
from \cite{leeStreinu}.

To complete the induction, consider the {\bf aqua edge exchange reversal}
move, since the tan one has an analogous proof.  The
precondition, that the edge $ij\in T$ is not in an $(a,a)$-component
of $A$, implies that $A + ij$ is $(a,a)$-sparse.  From this, the
pebble game invariants of \cite{leeStreinu} imply that there are
$a$ aqua pebbles (distinct from the one on $j$)
reachable from $i$ via paths using only edges in $A$.
By induction, $A$ could have been built by the basic
$(a,a)$-pebble game; then $ij$ could be added to $A$ by basic pebble
game searches. Notice that returning the tan-colored pebble to $j$ maintains
$T$ as a basic $(b,b)$-pebble game configuration.
\end{proof}

The other direction is captured by the following lemma.
\begin{lemma}\lemlab{pgInserts}
If an edge is independent of the underlying graph of a pebble game configuration,
then the pebble game will successfully insert it.
\end{lemma}

Before giving the proof, we briefly review Knuth's algorithm and establish
some terminology, specialized to our setup. The algorithm operates on a directed,
bipartite graph associated with a pebble game configuration $(H,A,T)$ and an edge $e$,
which is not in $H$.  This graph, denoted $\Gamma_{H+e}$, has vertex set given by
the edges of $H$, i.e., $A \disjointUnion T,$ along with two {\em terminal} vertices
$\alpha$ and $\tau$, and an additional vertex for the edge $e$.  First, we describe
the edges originating and terminating at a vertex $x\notin\{e,\alpha,\tau\}$.

There is a directed edge from vertex $x$ to $y$, written $x \to y$ if \LTcomment{Got rid of the footnote nobody likes.}
\[
\left\{
\begin{array}{ll}
(A - y) + x \text{ is $(a,a)$-sparse} & \quad \text{if } x \in T \text{ \& } y \in A\\
(T- y) + x \text{ is $(b,b)$-sparse} & \quad \text{if } y \in T \text{ \& }x \in A\cap B\text{, i.e., }\\ &x \text{ is a black edge in the aqua partition}
\end{array} \right.
\]
Additionally, there is an edge $x\to \alpha$ if $x\in T$ and $A + x$ is
$(a,a)$-sparse, and there is an edge from $x\to \tau$ if $x\in A\cap B$ and
$T + x$ is $(b,b)$-sparse.
The edges originating at $e$ are defined similarly. This case distinction
is simply to make it clear that no edges in $\Gamma_{H+e}$ have
$e$ as their target.

A path $x_0 \to x_1 \to \cdots \to x_\pi$ has a \emph{shortcut} in a graph
if there exists a $j > i+1$ such that $x_i \to x_j$ is an edge.
In particular, if $x_0 \to x_1 \to \cdots \to x_\pi$ is a shortest path
in a graph, it does not have a shortcut.

Given a path from $e$ to a terminal vertex, \emph{recoloring along
the path} means putting $x_i$ in the part of the partition
containing $x_{i+1}$, with $\alpha$ always in $A$ and $\tau$
always in $T$.

The main result of \cite{Knuth:1973:MP:891978}, again specialized for our setup, is:
\begin{theorem}\theolab{knuth}
Let $(H,A,T)$ be a pebble game configuration and $e$ an edge not in the underlying
graph.
Then there is a directed path in $\Gamma_{H+e}$ from $e$ to $\alpha$ or
$\tau$ if and only if $H+e$ is independent.  Moreover, given a
path $e = x_0 \to x_1 \to \ldots \to x_\pi\in\{\alpha,\tau\}$
in $\Gamma_{H+e}$ that does not have a shortcut, a partition
of $H+e$ certifying $[a,b]$-sparsity can be found by recoloring
along this path.
\end{theorem}
The proof of \lemref{pgInserts} amounts to showing that \algref{abPGfindPebble}
is simulating Knuth's algorithm.
\begin{proof}[Proof of \lemref{pgInserts}]
Assume that $e$ is independent of the underlying graph of a pebble game configuration
$H$.
We need to show that \algref{abPGfindPebble} will succeed in collecting
enough pebbles on the endpoints of $e$.  This is done by comparing
the pebble search procedure in \algref{abPGfindPebble}
to Knuth's algorithm.

First consider, in the main loop of \algref{abPGfindPebble},
the conditional block predicated upon when
$a+1$ (if $c$ is aqua) or $b+1$ (if $c$ is tan) pebbles can be collected on
the endpoints of $f$. Note that $a$ aqua or $b$ tan pebbles can always be
collected on any vertex by \cite{leeStreinu}.  The additional pebble can be collected
if and only if the edge $f$ can be moved to the opposite part of the partition without
violating sparsity.  This is equivalent to there being an edge
$f \to \{\alpha,\tau\}$ in $\Gamma_{H+e}$.

Otherwise, the pebble search fails.  In
this case, \cite{leeStreinu} implies that
$F' + f$ is the fundamental circuit of $f$ in the $c$-colored part of the partition;
i.e., $g\in F'$ if and only if there is an edge $f\to g$ in $\Gamma_{H+e}$.
Therefore, $F'$ is exactly the set of neighbors of $f$ in $\Gamma_{H+e}$.

By enqueuing those edges in $F'$ not already in $F$,
\algref{abPGfindPebble} is, in fact, searching
$\Gamma_{H+e}$ in a breadth-first fashion.
By \theoref{knuth}, the assumption that $H+e$ is $[a,b]$-sparse implies that
there is a path from $e$ to a terminal vertex in $\Gamma_{H+e}$.
Therefore, \algref{abPGfindPebble}
will be able to collect $a+1$ aqua or $b+1$ pebbles on the endpoints of some edge $f$,
implying that there is an edge from $f$ to a terminal in $\Gamma_{H+1}$.  Let $p$
be the path in $\Gamma_{H+e}$ defined by following predecessor pointers from
$f$.  Since \algref{abPGfindPebble} implements breadth-first search on $\Gamma_{H+e}$,
$p$ is shortcut free.

\theoref{knuth} then implies that recoloring along $p$ preserves the $(a,a)$-
and $(b,b)$-sparsity of $A$ and $T$ at every step.  The main results
of \cite{leeStreinu} then imply that it will always be possible to
meet the preconditions of the {\bf exchange edge reversal} moves
to implement the recoloring by using only basic pebble searches on $A$
or $T$.	 Thus, the pebble game moves implementing the recoloring along
$p$ will succeed, and the $[a,b]$-pebble game will insert $e$.
\end{proof}

\subsection{Circuits}
The pebble game also detects $[a,b]$-circuits, an approach that is perhaps less
well-known, but appears before in \cite[Section 6]{leeStreinu} and has been used
in \cite{BHMT11}.
Note that the presence of red edges create the possibility of many types of circuits.
Some may be circuits as uncolored $(a+b,a+b)$ graphs, and others may be $(a,a)$-circuits, but there
are other types.
The examples in \figref{circuits21} demonstrate a property of circuits that does not arise
in the $(k,\ell)$-sparsity
matroids. While every $(k,\ell)$-circuit is $(k,\ell)$-spanning, or ``rigid,''
an $[a,b]$-circuit may actually be ``flexible.''
Dropping an edge of a $(k,\ell)$-circuit always results in a tight graph, but dropping an edge of
an $[a,b]$-circuit can result in a sparse (but not tight) graph.

\begin{figure}[htb]
\subfigure[The fundamental {$[1,2]$}-circuit for rejected edge $g$ detected by the pebble game is the set of edges $\{e,g,y\}$. The set $\{e,y\}$ forms an angularly rigid block.]{\begin{minipage}[b]{.48\linewidth}\centering\includegraphics[scale=.4]
{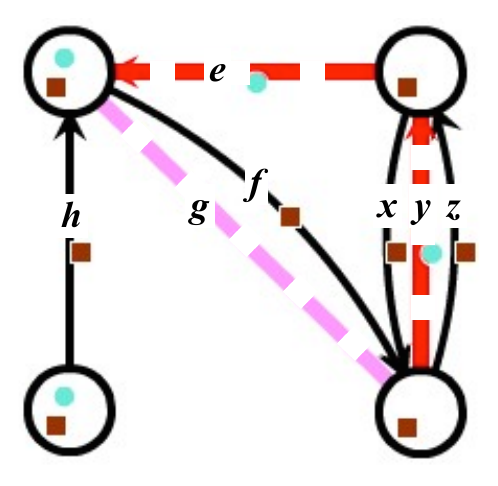}\end{minipage}\label{fig:abPG21circuit}}
\hfil
\subfigure[The fundamental {$[1,2]$}-circuit for rejected edge $y$ detected by the pebble game is the set of edges $\{e,g,x,y,z\}$. The set $\{x,z\}$ forms a contextually rigid block due to the presence of the angular block $\{e,g\}$.]{\begin{minipage}[b]{.48\linewidth}\centering\includegraphics[scale=.4]{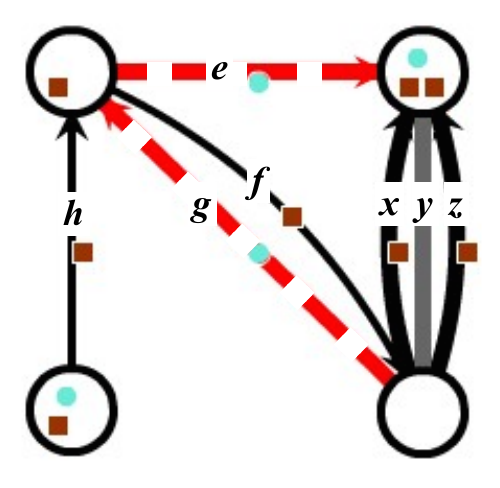}\end{minipage}\label{fig:circuit21}}
\caption{Unlike $(k,\ell)$-circuits, removing any edge from a $[1,2]$-circuit may not produce a $[1,2]$-graph.}
\label{fig:circuits21}
\end{figure}

Whenever we fail to insert an edge we can use \algref{abPGfindPebble}
to find its fundamental circuit.
\begin{lemma}Let $F$ be the set of edges returned
by \algref{abPGfindPebble}.  The fundamental circuit of $e$ in the configuration
graph $H$ is $F+e$.
\end{lemma}
\begin{proof}
We must show that $F+e$ is dependent and that, for any $y \in F$, $F + e - y$ is independent.
Observe that $F+e$ corresponds to the set of vertices reachable from $e$ in the Knuth graph $\Gamma_{H+e}$.
By the definition of $F$, every directed path in $\Gamma_{H+e}$ that starts at $e$ is
contained in $\Gamma_{F+e}$.  Therefore, there is no path from $e$ to a terminal in $\Gamma_{F+e}$, and
\theoref{knuth} implies that $F+e$ is not $[a,b]$-sparse.

Now, let $y \in F$. By construction of $F$, $y$ is on a short-cut free directed path starting at $e.$
Therefore,
there is an $x \in F+e$ on this path with $x \to y$ an edge of $\Gamma_{H+e}.$
By definition, removing
$y$ results in an edge from $x$ to a terminal, providing a path from $e$ to a terminal.
In other words, $F+e-y$ is $[a,b]$-sparse for all $y\in F$, which completes the
proof of correctness.
\end{proof}

\subsection{Complexity analysis}
The running time of  \algref{abPG} for a graph with $n$ vertices and $m$ edges
is $O(mn^2)$, which is $O(n^4)$.  First we observe that collecting the initial $a+b$ pebbles
in \algref{abPG} requires $O(n)$ for each edge (a total of $O(mn)$) and that the rest of the
steps may be charged to $O(m)$ invocations of \algref{abPGfindPebble}.

The running time of \algref{abPGfindPebble} is $O(n^2)$.  This
is because each of the $O(n)$ edges in the configuration is enqueued at most once in the main loop,
and each edge that is enqueued triggers a basic pebble game search
requiring $O(n)$ steps by \cite{leeStreinu}, after first copying a
configuration of size $O(n)$.

By way of comparison, a direct application of Knuth's algorithm leads to a more expensive running
time. In this approach, one might build
the bipartite graph explicitly and use the
basic pebble game to test each possible edge. The graph $\Gamma_{H+e}$ has $O(n^2)$ edges, and
each check would require an $O(n^2)$-time run of the basic pebble game; this would result in
a total of running time of $O(mn^4) = O(n^6)$.

\subsection{Finding components}
Within an $[a,b]$-sparse graph, an \emph{induced $[a,b]$-component} is a vertex-maximal
$[a,b]$-block. It is straightforward to adapt Algorithm \ref{alg:abPG} to maintain and
detect induced $[a,b]$-components, as in the $(k,\ell)$-pebble game algorithm with
components described in \cite{leeStreinu}. The running time would remain $O(n^4)$.

Note that any edge with vertices contained in an induced $[a,b]$-component is
dependent and will be rejected by this adapted pebble game in $O(1)$ time.
However, an edge may be dependent without being contained in an induced
$[a,b]$-component, as the circuits in \figref{circuits21} demonstrate. Therefore,
\emph{unlike the $(k,\ell)$-sparsity case}, we do not save a factor in the running time
by maintaining induced $[a,b]$-components.

\subsection{Detecting factor graphs}
We now describe algorithms for detecting factor graphs, useful in expressing
the pure condition. We begin by detecting
factor graphs in a $(k,k)$-graph before turning to $[a,b]$-graphs.

\subsubsection{The $k$-factor graphs algorithm.} Given a $(k,k)$-graph, Algorithm \ref{alg:kFactors} finds the factor graphs associated to the irreducible
factors of its pure condition. The basic intuition is that the algorithm detects minimal
rigid subgraphs, which are associated to irreducible factors, contracts them (into a
vertex representing a rigid body) and recurses.
\begin{algorithm}
\caption{The $k$-factor graphs algorithm.}
\label{alg:kFactors}
{\bf Input:} A $(k,k)$-graph $G = (V, E)$.\\
{\bf Output:} The factor graphs of  the irreducible factors of $C_G$.\\
{\bf Method:} \vspace{-3mm}
\begin{enumerate}
\item Initialize ${\cal P} = \emptyset$ and $H = G$.
\item Play the $(k,k+1)$-pebble game on $G$ to find a maximal $(k,k+1)$-sparse graph $G'$.
\item For every edge $e$ that is rejected (there is at least one):
\begin{enumerate}
\item Use the $(k,k+1)$-pebble game to detect the fundamental $(k,k+1)$-circuit $G_e$
of $e$ in $G'$.
\item Set ${\cal P} = {\cal P} +G_e$; set $H = H/G_e$.
\end{enumerate}
\item If $H$ is not a single vertex:
\begin{enumerate}
\item Recursively use the {\bf $k$-factor graphs algorithm}
on $H$ to obtain factor graphs ${\cal P'}$.
\item Set ${\cal P} = {\cal P} \cup {\cal P'}$.
\end{enumerate}
\item Output ${\cal P}$.
\end{enumerate}
\end{algorithm}
To show correctness, we need to check that every factor graph we find is \emph{irreducible} and \emph{proper}.
Properness comes from \theoref{kk-factor}.  Irreducibility of the pure condition of a $(k,k)$-graph
is characterized by White and Whiteley \cite[pg. 27]{whiteWhiteley}: the pure condition of a graph %
is irreducible
if and only if the graph contains no proper block. The graph contains no proper block if and only if,
for every proper
subgraph %
with $n'$ vertices and $m'$ edges, $m' < kn' - k$ (i.e., strict inequality holds on proper subgraphs).
Since we are considering a
$(k,k)$-graph
with all proper subgraphs $(k,k+1)$ sparse, this holds precisely when the graph is a $(k,k+1)$-circuit.

\subsubsection{The $[a,b]$-factor graphs algorithm.}
As shown in \secref{factors}, the factors of an $[a,b]$-graph have a more
complicated combinatorial structure than those of $(k,k)$-graphs. Therefore,
Algorithm \ref{alg:abFactors} adapts the $k$-factor graphs algorithm to rely on a
subroutine (Algorithm \ref{alg:abRedFactors}) that detects the additional types of factors,
both proper and improper.
These algorithms intuitively follow the same process of
detecting rigid components and contracting them; if a contextually rigid component is detected,
it is handled in Step \ref{alg:abRedFactorsContextual} of Algorithm \ref{alg:abRedFactors}.
\begin{algorithm}
\caption{The $[a,b]$-factor graphs algorithm.}
\label{alg:abFactors}
{\bf Input:} an $[a,b]$-graph $G = (V, E=R \disjointUnion B)$, with a set of red edges $R$ and
a set of black edges $B$.\\
{\bf Output:} Proper (irreducible) and improper factor graphs
of $G$ that together provide a factorization of $G$. \\
{\bf Method:} \vspace{-3mm}
\begin{enumerate}
\item Initialize ${\cal P} = \emptyset$; $\cal I = \emptyset$; $H = G$; set $k = a+b$.
\item Play the $(k,k+1)$-pebble game on $G$ to find a maximal $(k,k+1)$-sparse graph $G'$.
\item For every edge $e$ that is rejected (there is at least one):
\begin{enumerate}
\item Use the $(k,k+1)$-pebble game to detect the fundamental $(k,k+1)$-circuit $G_e$
of $e$ in $G'$.
\item Use the {\bf $[a,b]$-red-factor graphs algorithm} on $G_e$ to obtain sets of factor graphs
${\cal P}_e$ and ${\cal I}_e$.
\item Set $\cal P = \cal P \cup$$ {\cal P}_e$ and $\cal I =\cal I \cup$$ {\cal I}_e$;
set $H = H/G_e$.
\end{enumerate}
\item If $H$ is not a single vertex:
\begin{enumerate}
\item Recursively use the {\bf $[a,b]$-factor graphs algorithm}
on $H$ to obtain sets of factor graphs ${\cal P}'$ and ${\cal I}'$.
\item Set $\cal P = \cal P \cup $${\cal P}'$ and $\cal I = \cal I \cup $${\cal I}'$.
\end{enumerate}
\item Output ${\cal P}$ and $\cal I$.
\end{enumerate}
\end{algorithm}
\ASJcomment{do we union in I' or just set to I' in second to last step?}

\JScomment{Do we want $G$ to be an $[a,b]$-graph?}
\begin{algorithm}
\caption{The $[a,b]$-red-factor graphs algorithm.}
\label{alg:abRedFactors}
{\bf Input:} An $[a,b]$-bi-colored graph $G = (V, E=R \disjointUnion B)$, with a set of red edges $R$ and
a set of black edges $B$, that contains no proper $(a+b,a+b)$-components.\\
{\bf Output:} Proper (irreducible) and improper factor graphs
of $G$ that together provide a factorization of $G$. \\
{\bf Method:} \vspace{-3mm}
\begin{enumerate}
\item Play the $(a,a)$-pebble game on $(V, R)$ and detect red $(a,a)$-components.
\item If no $(a,a)$-components of $(V,R)$ were found, output $\{G\}$.
\item Otherwise, let $H_1,\ldots, H_t$ be the red $(a,a)$-components.
\begin{enumerate}
\item For each graph $(V(H_i),E(V(H_i))\cap B)$, play the $(b,b)$-pebble game
to detect $(b,b)$-components.
\item If no $(b,b)$-components are found, use the {\bf $a$-factor graphs algorithm} on
each $H_i$ to obtain a set of proper factor graphs $\cal P$ that are the
$a$-factor graphs of the $H_i$.  Then
return $\cal P$ and ${\cal I} = \{(V, E \setminus \{E(V(H_i)) : i\in [t]\} )\}$.
\item \label{alg:abRedFactorsContextual} Otherwise, there is a $(b,b)$-component $J$ in the vertex
span of some component $H_i$. Set $H = H_i$.
\begin{enumerate}
\item Use the {\bf $a$-factor graphs algorithm} on $H$ to obtain a set of
factor graphs ${\cal P}_H$.
\item Use the {\bf $b$-factor graphs algorithm} on $J$ to obtain a set of
factor graphs ${\cal P}_J$.
\item Use the $[a,b]$-pebble game to find an $[a,b]$-fan of $G$ and collect
the remaining $k$ pebbles on a vertex %
in $V(J)$.
Delete every edge of $H$ with its
tail in $V(J)$.  Contract $V(J)$ into a single vertex to
get a graph $G'$.
\item Use the {\bf $[a,b]$-factor graphs algorithm} on $G'$ to get
a set of factor graphs ${\cal P}'$ and ${\cal I}'$.
\item Find the factor graph $F_{H}$ in ${\cal P}'$ that contains an edge of $H$.
\item Return ${\cal P}_{H} \cup {\cal P}_J \cup ({\cal P}'\setminus \{F_{H}\})$
and ${\cal I}'$.
\end{enumerate}
\end{enumerate}

\end{enumerate}
\end{algorithm}

We provide an overview of how
the $[a,b]$-factor graphs algorithm performs on a $[1,2]$-graph with only proper factors
in \figref{example21recurse}
and on a $[1,1]$-graph with one improper factor in \figref{example11improper}.
A more comprehensive trace of the algorithm is given in on a $[1,1]$-graph
in \figref{factorAlgEx11}.

\begin{figure}[hbt]
\centering\includegraphics[scale=.5]{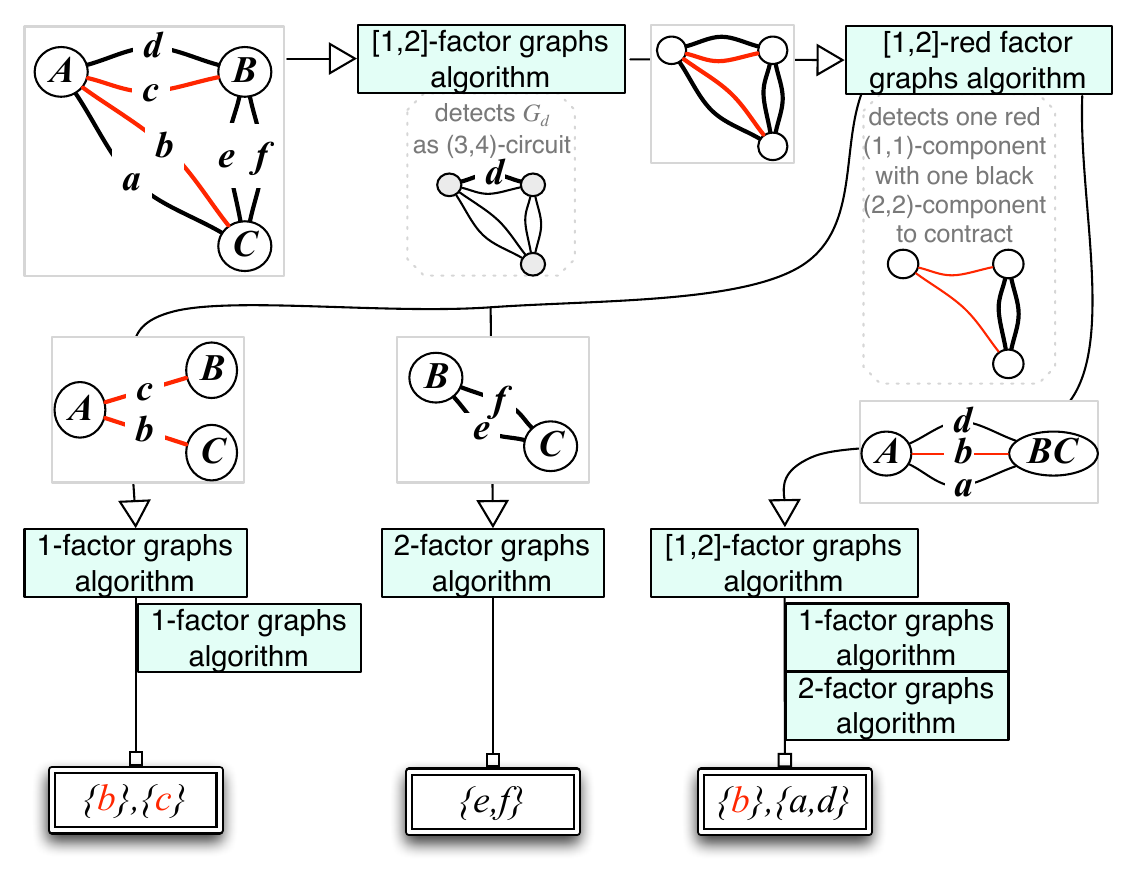}
\caption{Algorithm \ref{alg:abFactors} finds the proper factor graphs of the {$[1,2]$}-graph underlying \figref{gen3Body2D}: $\{\{e,f\},\{a,d\},\{b\},\{c\}\}$; there are no improper factor graphs. The pure condition is {$\pm [ef]_{(1,2)}[ad]_{(1,2)}[b]_{(3)}[c]_{(3)}$}. A bracket subscripted by ordered tuple $T$ denotes the determinant of the $|T| \times |T|$ matrix with coordinates specified by $T$.}
\label{fig:example21recurse}
\end{figure}

\begin{figure}[hbt]
\centering\includegraphics[scale=.5]{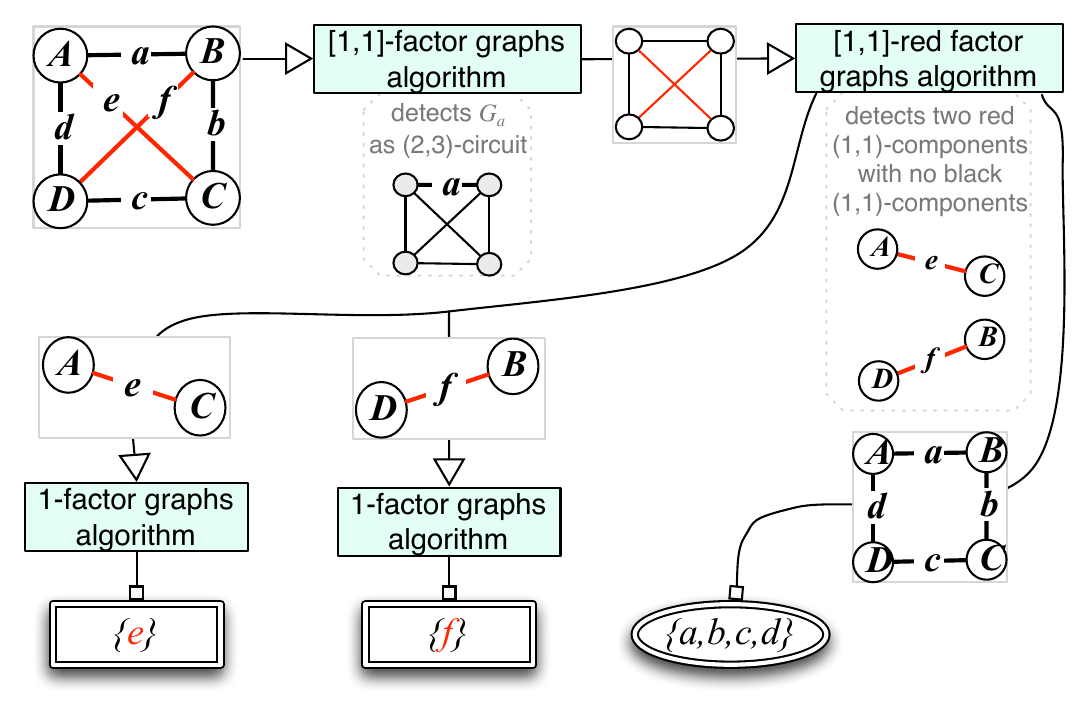}
\caption{Algorithm \ref{alg:abFactors} on a {$[1,1]$}-graph finds proper factor graphs $\{\{e\},\{f\}\}$ and an improper factor graph $\{\{a,b,c,d\}\}$. The pure condition is {$[e]_{(1)}[f]_{(1)}$}$(a_2b_1c_1d_1 - a_1b_2c_1d_1 - a_1b_1c_2d_1 + a_1b_1c_1d_2)$ = $e_1f_1(a_2b_1c_1d_1 - a_1b_2c_1d_1 - a_1b_1c_2d_1 + a_1b_1c_1d_2)$. A bracket subscripted by ordered tuple $T$ denotes the determinant of the $|T| \times |T|$ matrix with coordinates specified by $T$.
}
\label{fig:example11improper}
\end{figure}

We first prove correctness of Algorithm \ref{alg:abRedFactors}.
\begin{claim}
Algorithm \ref{alg:abRedFactors} returns only factor graphs; the proper factor
graphs are irreducible.
\end{claim}
\begin{proof}

The calls to the $a$-factor graphs algorithm and $b$-factor graphs algorithm produce irreducible
proper factor graphs by the correctness of Algorithm \ref{alg:kFactors}.  In Step 3.(c)iii we have an $[a,b]$-fan
obtained from a pebble
game configuration certifying $(a,a)$-sparsity of the aqua partition and $(b,b)$-sparsity of the
tan partition.  Via sparsity, we also know that there exists an $[a,b]$-tree decomposition that
gives rise to this $[a,b]$-fan if we direct all edges towards the tie-down.  Therefore, by
\theoref{bb-factor} we know that $G'$ is an $[a,b]$-graph.

Also by \theoref{bb-factor}, $H' = H / J$ is a (red) $(a,a)$-block.
Thus, after contracting and recursively calling Algorithm \ref{alg:abFactors},
there is exactly
one factor graph $H'$ containing an edge from $H$, so
Step 3(c)v is well-defined. Finally, removing the factor graph $F_H = H'$ from the set of
returned factor graphs completes the algorithm's correctness.
\end{proof}

This, along with  \theoref{kk-factor},
allows us to conclude correctness of our main Algorithm \ref{alg:abFactors}:
\begin{claim}
Algorithm \ref{alg:abFactors} returns only factor graphs, and the proper factor
graphs are irreducible.
\end{claim}
We do not know if the improper factor graphs are irreducible, since we do not
have a nice representation for them.

\begin{ex}[name={Example \ref{ex:22}, continued for the last time}]
In the $(k,k)$-setting, if an irreducible factor of $P_G$ is zero,
we may identify a $(k,k)$-subgraph of $G$ that is in a non-generic position, and there is a
dependence relation on the rows of  the submatrix corresponding to this subgraph.

In our example, setting each factor equal to zero implies
the existence of a dependence among the rows of the rigidity matrix of $G$, but there is an important distinction
between the supports of these dependence relations.  If $(a_1b_2-a_2b_1)=0$ we are guaranteed to
get a dependence supported on the first two rows of the rigidity matrix (which contain $a$ and $b$).
However, if $(c_3d_4-c_4d_3)=0$, the associated linear dependence \emph{must}  involve all
four rows of the rigidity matrix.  This is because there are no conditions on $c_1, c_2, d_1,$ and $d_2$
implied by $(c_3d_4-c_4d_3)=0$.\LTcomment{Check this.  Rewording.}

What this example is showing is that the rigidity matrix may drop rank because
a polynomial supported on a subset of edges vanishes, yet the corresponding dependence in the
rigidity matrix may be supported on a set of rows indexed by a larger subset of edges.
This phenomenon complicates the correspondence between the combinatorics of
$G$ as an $[a,b]$-graph and factors of $P_G.$ \LTcomment{Check this. Rewording.}

\end{ex}

\begin{question}
\label{conj:irred}
Are all the factor graphs found by Algorithm \ref{alg:abFactors} are irreducible?
\end{question}

\section{A case study}
\seclab{caseStudy}

In this section we show how a geometric interpretation of the vanishing of the pure condition can be used to predict special positions of the 2D body-and-cad framework consisting of 3 bodies, 2 bars, and 2 line-line coincidence constraints depicted in \figref{gen3Body2D}.

The associated primitive cad graph, in which an edge corresponds to a linear constraint, is given in \figref{threeBodyPrimCad}.
\begin{figure}[tb]
\subfigure[The primitive cad graph for the framework in \figref{gen3Body2D}.  Red edges represent angular constraints.]{\begin{minipage}[b]{.25\linewidth}\centering\includegraphics[width=\linewidth]
{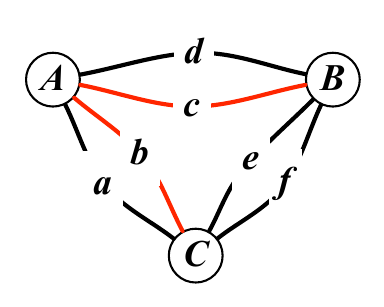}\end{minipage}\label{fig:threeBodyPrimCad}}
\hfil
\subfigure[The special position in which $\ell_1 \| \ell_2$.]{\begin{minipage}[b]{.32\linewidth}\centering\includegraphics[scale=.15]
{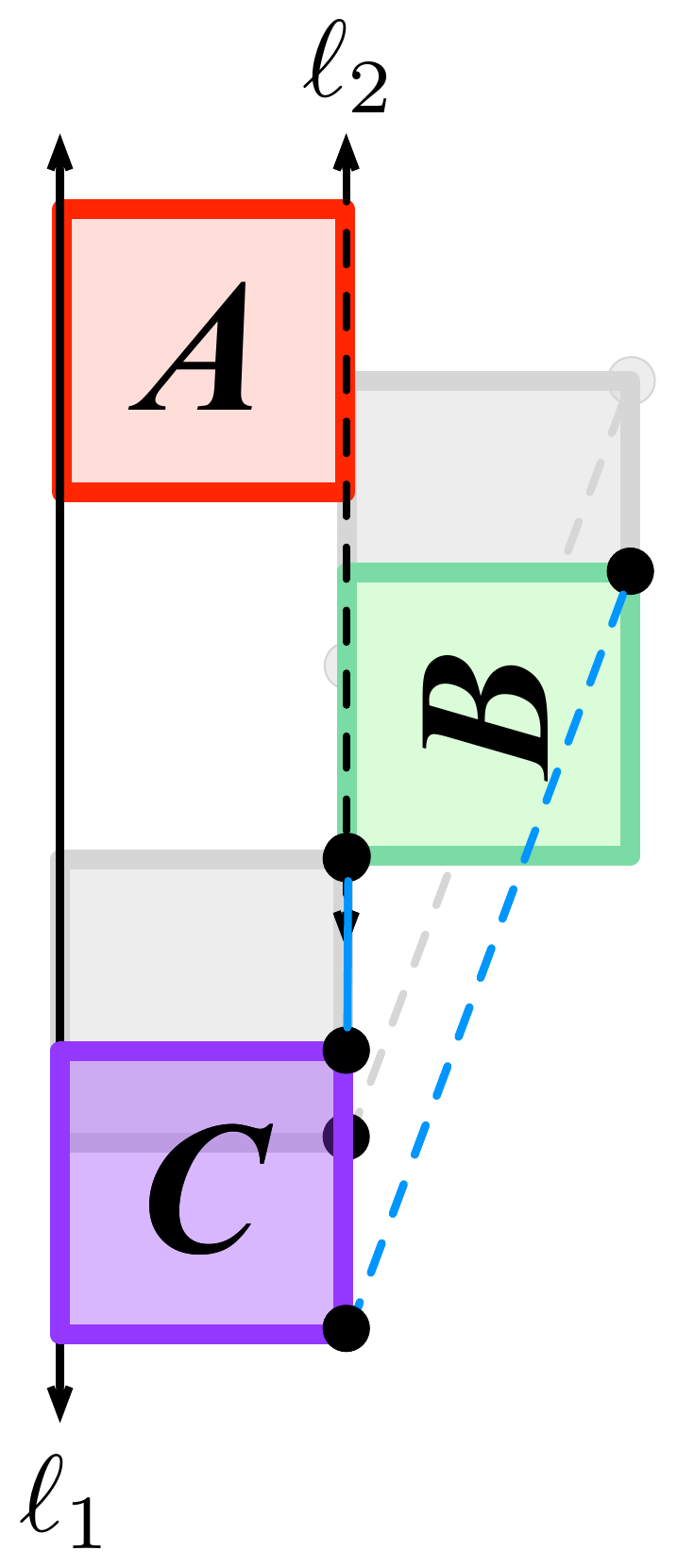}\end{minipage}\label{fig:threeBodySpecial1}}
\hfil
\subfigure[The special position in which the bars are parallel.]{\begin{minipage}[b]{.32\linewidth}\centering\includegraphics[scale=.15]
{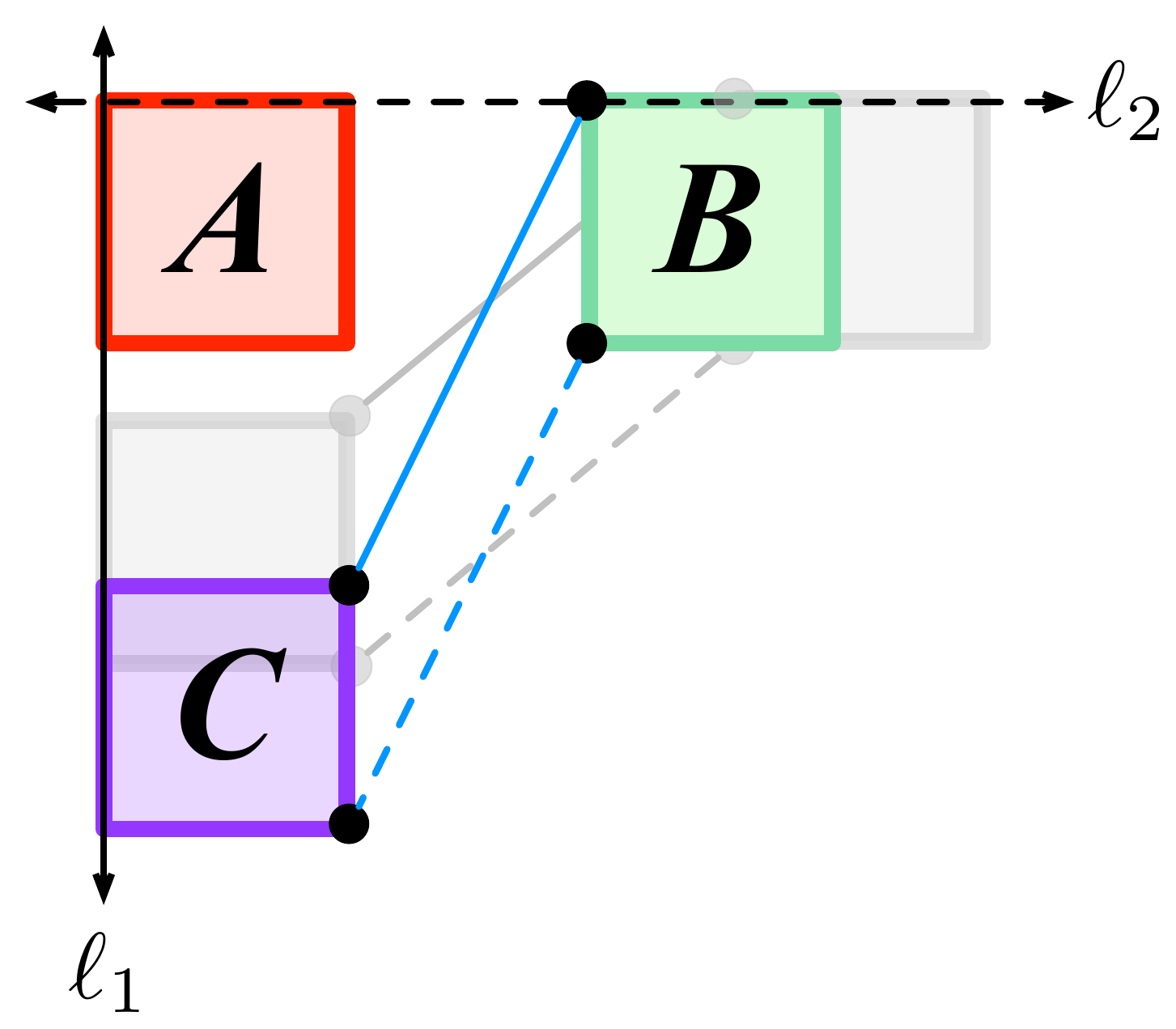}\end{minipage}\label{fig:threeBodySpecial2}}
\caption{A 2D body-and-cad example on 3 bodies; we assume that body $C$ is tied down.}
\label{fig:specialPositions3body}
\end{figure}
In this graph, each line-line coincidence is represented by a red edge, corresponding to a line-line parallel constraint (which restricts only angular motion), and a black edge corresponding to a point-line coincidence constraint (which restricts one translational degree of freedom).   Each bar eliminates 1 degree of freedom and is represented by a black edge.  %

We will realize this framework in the projective plane $\PP^2$, which
allows us to unify the treatment of rotations and translations if we view an infinitesimal translation as a rotation about a point on the line at infinity.  (See \cite{whiteWhiteley} and \cite{halleretal} for a detailed introduction to this point of view.)
The framework will lie in the affine piece of $\PP^2$ with coordinates $[x:y:1]$.  From this point of view, the line at infinity is $[x:y:0]$ and parallel lines in the $[x:y:1]$-plane meet at a point on the line at infinity corresponding to their slope.

A 2D infinitesimal rigid motion can be represented by a point in $(\PP^2)^*$ which is dual to the space in which we embed the framework, and a rotation is represented by a point whose first two coordinates are zero. %
Note that the point $[0:0:1] \in (\PP^2)^*$ is dual to the line at infinity in our original $\PP^2.$

If we tie down body $C$ and consider the underlying $(3,3)$-graph, the pure condition is the bracket polynomial $[abc][def]-[abd][cef].$  Since the edges labeled $b$ and $c$ correspond to angular constraints, and their
\ASJcomment{should be ``first two coordinates are zero'' but I don't know if that means other things have to change?}
last two coordinates are zero, the bracket $[abc]$ evaluates to zero.  Therefore, in this case the pure condition is a bracket monomial, $[abd][cef],$ which vanishes when either factor vanishes.
Note that the brackets $[abd]$ and $[cef]$ represent reducible polynomials in the coordinates of $a,\ldots, f.$  In fact,
$[abd] = b_3(a_1d_2-a_2d_1),$ and $[cef] = c_3(e_1f_2-e_2f_1).$
Therefore, $C_G$ has four irreducible factors. Compare to the output of Algorithm \ref{alg:abFactors} in \figref{example21recurse}.

The vanishing of each bracket factor has a geometric interpretation that is obscured in the full polynomial form.
The bracket $[abd]$ vanishes when the three points $a,b,$ and $d$ are collinear in $(\PP^2)^*.$  This means that the lines dual to these three points meet at a point in $\PP^2.$  Since the line dual to $b$ is the line at infinity, this means that the lines dual to $a$ and $d$ meet at a point on the line at infinity.  Consequently, these two lines are parallel.  In terms of our original geometric constraints, this shows that if lines $\ell_1$ and $\ell_2$ are parallel, then the body-and-cad framework is in a special position, and the framework admits an internal motion as depicted in Figure \ref{fig:threeBodySpecial1}.  Similarly, the bracket $[cef]$ vanishes when the lines dual to $e$ and $f$ are parallel in $\PP^2,$ which is shown in Figure \ref{fig:threeBodySpecial2}.
Thus, this analysis of the factors of the pure condition associated to the design in \figref{swSketchSpecial}
leads to the useful feedback that ``these two lines being parallel cause a dependency,'' prompting the problem of automating
such an analysis generally, which will require factoring in the \emph{Grassmann-Cayley algebra} when the pure condition is not just a product of brackets.

\section{Conclusions}
\seclab{OpenQuestions}
The approach presented in this paper is part of a larger research path to
provide computational tools that will
give users information about dependencies present in CAD structures
in terms of the original geometric constraint system.
A prototype of \algref{abPG} has been implemented, with a long-term goal
to see the pebble game and factor algorithms incorporated into
commercial CAD software packages.
By analyzing the pure condition, we can detect special positions of a generically minimally
rigid body-and-cad structure.
However, since $C_G$ vanishes when $G(\bp)$ is {\em infinitesimally} flexible,
special positions that we find may not be truly flexible.
These positions may still be of interest to a CAD user, as an
infinitesimally flexible framework carries an internal stress,
indicative of structural weaknesses.
Moreover, we may be able to combine conditions implying a special position to create
degenerate embeddings with true motions.
We conclude with
a brief discussion of open questions that arise as we move toward further development of our approach.

Algorithm \ref{alg:abFactors} returns factor graphs of the pure condition of an $[a,b]$-graph,
but it remains open as to whether these factors are irreducible or not.%
When $b=0,$ the results of White and Whiteley \cite{whiteWhiteley} show that irreducible factors of $C_G$ correspond to circuits in $G$; this correspondence implies that Algorithm \ref{alg:kFactors} produces irreducible factors for body-and-bar graphs.
A better understanding of circuits and stresses would allow us to similarly conclude that
the factors identified by Algorithm \ref{alg:abFactors}.

We were able to carry out an analysis in the case study of \secref{caseStudy}
where the pure condition was just a product of brackets, and its vanishing was implied by either making two bars
parallel or two lines parallel.
More generally, a geometric interpretation of the vanishing of a more complicated non-monomial bracket
polynomial may be possible via the process of Cayley factorization, which takes as input
a polynomial written in terms of brackets of vectors and outputs an expression in terms of meets and joins in the Grasmann-Cayley algebra
of those points if such an expression exists.  There is a Cayley factorization algorithm due to White \cite{whiteCayley, sturmfels}, and it
would be interesting to see if it could be modified (and sped up) if the input bracket polynomial is known to be a pure
condition.

Even when a Cayley factorization does exist, it may be nontrivial to extract geometric information
about the original framework from it.  One issue that adds complexity in general is that a single
\emph{cad} constraint
may impose multiple linear constraints, so conditions may need to be expressed in terms of sets of vectors.
Furthermore, in 3D, the vectors
in the brackets do not live in a space dual to our realization space (as they do in 2D), complicating
translation of the vanishing of the pure condition into the setting of our original constraints.

Finally, the results in this work rely on the combinatorial characterization of \cite{stjohnSidman}, which
apply to 3D body-and-cad structures \emph{without point-point coincidence constraints}. While a combinatorial
characterization that incorporates these constraints remains unknown,
3D body-and-cad frameworks with point-point coincidences share similar properties with presumed
barriers to a combinatorial characterization of 3D bar-and-joint frameworks.

\begin{figure}[bt]
\centering\includegraphics[width=.9\linewidth]{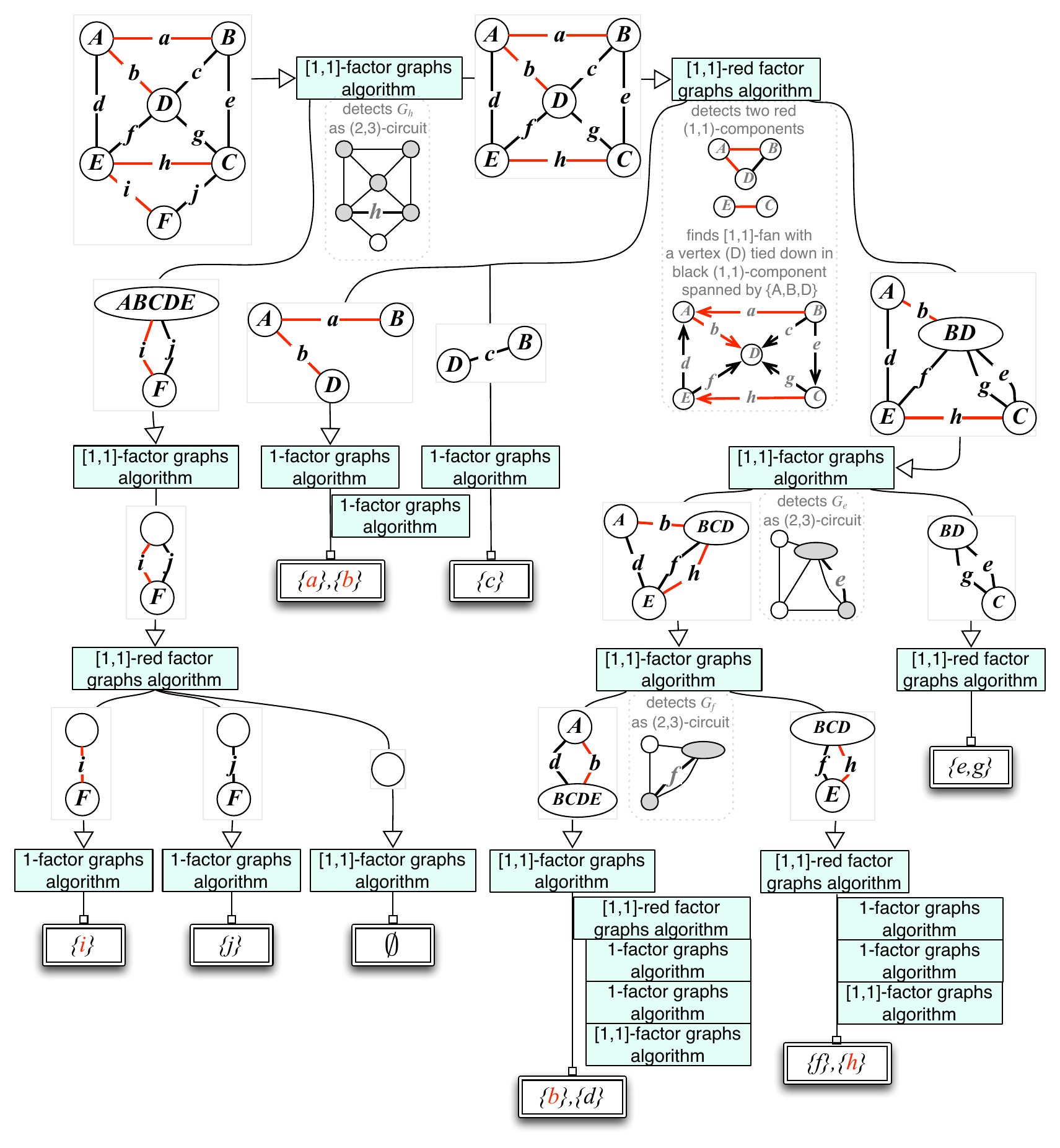}
\caption{Algorithm \ref{alg:abFactors} on a $[1,1]$-graph detects proper factor graphs: $\{\{a\},\{b\},\{c\},\{i\},\{j\},\{f\},\{h\},\{d\}\{e,g\}\}$. There are no improper factor graphs. The pure condition is $\pm [a]_{(1)}[b]_{(1)}[c]_{(2)}[i]_{(1)}[j]_{(2)}[f]_{(2)}[h]_{1)}[d]_{(2)}[eg]_{(1,2)}$.}
\label{fig:factorAlgEx11}
\end{figure}
\section*{\bf Figures and acknowledgements. }
Some figures created in SolidWorks 2010 and 2012.
We would like to thank Ruimin Cai for implementing some of the algorithms presented here and
Josephine Yu for her work in implementing the Cayley factorization algorithm
at the Macaulay2 workshop (Colorado College, Aug. 8-12, 2010; supported by
NSF Grant No. 0964128 and NSA Grant No. H98230-10-1-0218.)
We are grateful for the thoughtful and constructive feedback from anonymous reviewers.

\newpage
\section*{References}

\bibliographystyle{elsarticle-num}

\end{document}